\documentclass[sigconf,natbib=false]{acmart}
\AtBeginDocument{%
  }

\copyrightyear{2026}
\acmYear{2026}
\setcopyright{cc}
\setcctype{by}
\acmConference[CODASPY '26]{Proceedings of the Sixteenth ACM Conference on Data and Application Security and Privacy}{June 23--25, 2026}{Frankfurt am Main, Germany}
\acmBooktitle{Proceedings of the Sixteenth ACM Conference on Data and Application Security and Privacy (CODASPY '26), June 23--25, 2026, Frankfurt am Main, Germany}
\acmDOI{10.1145/3800506.3803498}
\acmISBN{979-8-4007-2562-3/2026/06}

\RequirePackage[
  datamodel=acmdatamodel,
  style=acmnumeric,
  ]{biblatex}

\newcommand{\R}{\ensuremath{\mathcal{R}}}
\newcommand{\LL}{\ensuremath{\mathcal{L}}}
\newcommand{\NIZK}{\ensuremath{\mathsf{NIZK}}}

\newcommand{\HE}{\ensuremath{\mathsf{HE}}}
\newcommand{\RR}{\ensuremath{\mathcal{R}_\mathsf{range}}}
\newcommand{\KC}{\textbf{KC}}

\newcommand{\Setup}{\ensuremath{\mathsf{Setup}}}
\newcommand{\Prove}{\ensuremath{\mathsf{Prove}}}
\newcommand{\Verify}{\ensuremath{\mathsf{Verify}}}
\newcommand{\Com}{\ensuremath{\mathsf{Com}}}
\newcommand{\KeyGen}{\ensuremath{\mathsf{KeyGen}}}
\newcommand{\Commit}{\ensuremath{\mathsf{Commit}}}
\newcommand{\Enc}{\ensuremath{\mathsf{Enc}}}
\newcommand{\Dec}{\ensuremath{\mathsf{Dec}}}
\newcommand{\Eval}{\ensuremath{\mathsf{Eval}}}
\newcommand{\Open}{\ensuremath{\mathsf{Open}}}
\newcommand{\Sign}{\ensuremath{\mathsf{Sign}}}

\newcommand{\pp}{\ensuremath{\mathsf{pp}}}
\newcommand{\ppc}{\ensuremath{\mathsf{ppc}}}
\newcommand{\pk}{\ensuremath{\mathsf{pk}}}
\newcommand{\vk}{\ensuremath{\mathsf{vk}}}
\newcommand{\sk}{\ensuremath{\mathsf{sk}}}
\newcommand{\evk}{\ensuremath{\mathsf{evk}}}
\newcommand{\rlk}{\ensuremath{\mathsf{rlk}}}
\newcommand{\bsk}{\ensuremath{\mathsf{bsk}}}

\newcommand{\st}{\mathit{st}}
\newcommand{\chall}{\mathit{chall}}

\newcommand{\IsValid}{\ensuremath{\mathsf{IsValid}}}

\newcommand{\Exp}{\ensuremath{\mathbf{Exp}}}
\newcommand{\Adv}{\ensuremath{\mathbf{Adv}}}
\newcommand{\oRegister}{\textrm{\sc Register}}
\newcommand{\oUpdate}{\textrm{\sc Update}}
\newcommand{\oAttest}{\textrm{\sc Attest}}
\newcommand{\oVerifyNew}{\textrm{\sc VerifyNew}}
\newcommand{\oCorrupt}{\textrm{\sc Corrupt}}
\newcommand{\oChallenge}{\textrm{\sc Challenge}}
\newcommand{\A}{\mathcal{A}}
\newcommand{\B}{\mathcal{B}}
\newcommand{\C}{\mathcal{C}}
\newcommand{\Cr}{\ensuremath{\mathit{Cr}}}
\newcommand{\cnt}{\ensuremath{\mathsf{cnt}}}
\newcommand{\LLL}{\ensuremath{\mathit{L}}}
\newcommand{\res}{\ensuremath{\mathsf{res}}}
\newcommand{\msg}{\ensuremath{\mathsf{msg}}}

\newcommand{\game}{\ensuremath{\mathbf{Game}}}

\newcommand{\getsr}{{\:{\leftarrow{\hspace*{-3pt}\raisebox{.75pt}{$\scriptscriptstyle\$$}}}\:}}

\usepackage{amsmath, array, amsthm}
\usepackage{float}
\usepackage{graphicx}
\usepackage{tabularray}

\usepackage[ruled, vlined, linesnumbered]{algorithm2e}

\let\oldnl\nl
\newcommand{\nonl}{\renewcommand{\nl}{\let\nl\oldnl}}

\SetKwComment{Comment}{$\triangleright$ }{}

\usepackage[many,skins]{tcolorbox}
\usepackage{url}
\usepackage[shortlabels,inline]{enumitem}
\usepackage{multirow, multicol}
\usepackage{balance}
\usepackage{circuitikz}
\usepackage{caption}
\usepackage{subcaption}

\setlist[itemize]{nosep, leftmargin=*, topsep=0pt, partopsep=0pt, itemsep=1pt, parsep=0pt}
\setlist[enumerate]{nosep, leftmargin=*, topsep=0pt, partopsep=0pt, itemsep=1pt, parsep=0pt}
\setlist[description]{nosep, leftmargin=*, topsep=0pt, partopsep=0pt, itemsep=1pt, parsep=0pt}

\usepackage[nameinlink]{cleveref}
\crefformat{figure}{#2(figure~#1)#3}
\Crefformat{figure}{#2Figure~#1#3}
\crefformat{table}{#2(table~#1)#3}
\Crefformat{table}{#2Table~#1#3}

\usepackage{rotating}
\usepackage{makecell}
\usepackage{pgfplots}
\pgfplotsset{compat=1.18}
\usepackage{tikzit}
\usepackage{mathtools}
\usepackage{breqn}
\usepackage{xfrac}

\DeclareMathAlphabet{\mathsc}{T1}{cmr}{m}{sc}
\newtheorem{definition}{Definition}

\newcounter{claimcounter}
\renewcommand{\theclaimcounter}{\arabic{claimcounter}}

\renewenvironment{proof}{{\par\noindent\textit{Proof Sketch.}}}{\hfill\qed\par}

\usepackage{varwidth}
\usepackage{adjustbox}
\usepackage{color, colortbl}
\usepackage{multirow}
\definecolor{anti-flashwhite}{rgb}{0.95, 0.95, 0.96}
\definecolor{Gray}{gray}{0.9}
\definecolor{airforceblue}{rgb}{0.36, 0.54, 0.66}
\definecolor{azure}{rgb}{0.023,0.603,0.952}
\definecolor{skygrayblue}{rgb}{0.878,0.952,1}
\definecolor{aliceblue}{rgb}{0.94, 0.97, 1.0}
\definecolor{alizarin}{rgb}{0.82, 0.1, 0.26}
\definecolor{amber}{rgb}{1.0, 0.75, 0.0}
\definecolor{amber(sae/ece)}{rgb}{1.0, 0.49, 0.0}
\definecolor{bronze}{rgb}{0.8, 0.5, 0.2}
\definecolor{bananayellow}{rgb}{1.0, 0.88, 0.21}
\definecolor{battleshipgrey}{rgb}{0.52, 0.52, 0.51}
\definecolor{bole}{rgb}{0.47, 0.27, 0.23}
\definecolor{bulgarianrose}{rgb}{0.28, 0.02, 0.03}
\definecolor{cadet}{rgb}{0.33, 0.41, 0.47}
\definecolor{ceil}{rgb}{0.57, 0.63, 0.81}
\definecolor{cerulean}{rgb}{0.0, 0.48, 0.65}
\definecolor{charcoal}{rgb}{0.21, 0.27, 0.31}
\definecolor{coolblack}{rgb}{0.0, 0.18, 0.39}
\definecolor{coolgrey}{rgb}{0.55, 0.57, 0.67}
\definecolor{darkcandyapplered}{rgb}{0.64, 0.0, 0.0}
\definecolor{darkbrown}{rgb}{0.4, 0.26, 0.13}
\definecolor{darkcerulean}{rgb}{0.03, 0.27, 0.49}
\definecolor{darkgray}{rgb}{0.66, 0.66, 0.66}
\definecolor{darkjunglegreen}{rgb}{0.1, 0.14, 0.13}
\definecolor{darktaupe}{rgb}{0.28, 0.24, 0.2}
\definecolor{davysgrey}{rgb}{0.33, 0.33, 0.33}
\definecolor{frenchblue}{rgb}{0.0, 0.45, 0.73}
\definecolor{almond}{rgb}{0.94, 0.87, 0.8}
\definecolor{beaublue}{rgb}{0.74, 0.83, 0.9}
\definecolor{beige}{rgb}{0.96, 0.96, 0.86}
\definecolor{bisque}{rgb}{1.0, 0.89, 0.77}
\definecolor{black}{rgb}{0.0, 0.0, 0.0}
\definecolor{forestgreen}{rgb}{0.2, 0.6, 0.2}
\definecolor{fluorescentorange}{rgb}{1.0, 0.75, 0.0}
\definecolor{ghostwhite}{rgb}{0.97, 0.97, 1.0}
\definecolor{antiquewhite}{rgb}{0.98, 0.92, 0.84}
\definecolor{LightCyan}{rgb}{0.88,1,1}

\newcommand{\CSP}{\textbf{SP}}

\UseTblrLibrary{diagbox}

\newcommand{\protocol}{\texttt{H-Elo}}

\begin{document}


\title[Hidden Elo]{Hidden Elo: Private Matchmaking through Encrypted Rating Systems}


\author{Mindaugas Budzys}
\orcid{0000-0002-1913-7985}
\affiliation{%
  \institution{Tampere University}
  \city{Tampere}
  \country{Finland}}
\email{mindaugas.budzys@tuni.fi}

\author{Bin Liu}
\orcid{0000-0002-6591-3711}
\affiliation{%
  \institution{Tampere University}
  \city{Tampere}
  \country{Finland}}
\email{bin.liu@tuni.fi}
\additionalaffiliation{%
  \institution{grchain.io}
  \city{Shenzhen}
  \country{China}
}

\author{Antonis Michalas}
\orcid{0000-0002-0189-3520}
\affiliation{%
  \institution{Tampere University}
  \city{Tampere}
  \country{Finland}}
\email{antonios.michalas@tuni.fi}



\begin{abstract}
Matchmaking has become a prevalent part in contemporary applications, being used in dating apps, social media, online games, contact tracing and in various other use-cases. However, most implementations of matchmaking require the collection of sensitive/personal data for proper functionality. As such, with this work we aim to reduce the privacy leakage inherent in matchmaking applications. 
We propose \protocol{}, a Fully Homomorphic Encryption (FHE)-based, private rating system, which allows for secure matchmaking through the use of traditional rating systems. In this work, we provide the construction of \protocol{}, analyse the security of it against a capable adversary as well as benchmark our construction in a chess-based rating update scenario. Through our experiments we show that \protocol{} can achieve similar accuracy to a plaintext implementation, while keeping rating values private and secure. Additionally, we compare our work to other private matchmaking solutions as well as cover some future directions in the field of private matchmaking. To the best of our knowledge we provide one of the first private and secure rating system-based matchmaking protocols.

\end{abstract}

\begin{CCSXML}
<ccs2012>
   <concept>
       <concept_id>10002978.10002979.10002981</concept_id>
       <concept_desc>Security and privacy~Public key (asymmetric) techniques</concept_desc>
       <concept_significance>500</concept_significance>
       </concept>
   <concept>
       <concept_id>10002978.10002991.10002995</concept_id>
       <concept_desc>Security and privacy~Privacy-preserving protocols</concept_desc>
       <concept_significance>500</concept_significance>
       </concept>
 </ccs2012>
\end{CCSXML}

\ccsdesc[500]{Security and privacy~Public key (asymmetric) techniques}
\ccsdesc[500]{Security and privacy~Privacy-preserving protocols}

\keywords{fully homomorphic encryption, zero-knowledge proofs, matchmaking, applied cryptography}

\maketitle

\section{Introduction}
\label{sec:intro}

In the current digital era, a wide range of applications and communications have moved to online spaces. As such, a larger emphasis is made on making professional or personal connections through online services. Notably, making connections has become easier due to matchmaking services available on social media, dating apps or in online games. One of the ways, to implement matchmaking is by using a rating system, such as the Elo rating system~\cite{glickman1995comprehensive}. In this case, players are matched with each other depending on their respective ratings. This form of matchmaking with rating systems have been used in a variety of zero-sum games, such as chess, traditional sports and e-sports. Additionally, the Elo system has been used in ecological studies~\cite{porschmann2010male} and artificial intelligence evaluation~\cite{askell2021general}. In most applications, the rating is calculated by a server and revealed publicly to all parties.

While this approach is convenient, it introduces some notable drawbacks and privacy concerns in matchmaking. Namely, in some cases the ratings are hidden from users and only the server has access to it, leaving users in the dark of their rating, such as in the case of the previous Tinder algorithm~\cite{soemardi2023testing}. Additionally, rating systems have drawn criticism for being easy to exploit for selective pairing~\cite{vevcek2014comparison}, where a player can aim to search for opponents where they maximize gains and minimize losses, especially when players can choose their own opponents. Alongside this, revealing the exact rating of a person can introduce ways for malicious parties to monitor specific user ratings and identify patterns in their behaviour. Due to these reasons, there is a clear need for securing rating calculations, allowing people to know their exact rating while being able to fairly match with opponents and avoid prying eyes.

To address these issues and encourage further research into private matchmaking through rating systems, we propose \texttt{Hidden Elo} (\protocol{}) -- a Fully Homomorphic Encryption (FHE)-based protocol for rating updates and verification. More specifically, we make use of the Cheon-Kim-Kim-Song Residual Number System (CKKS-RNS)~\cite{cheon2018full} FHE scheme (we explain our rationale for choosing CKKS-RNS in \autoref{ssec:design-rat}) as well as Zero-Knowledge Proofs (ZKPs) to implement a matchmaking protocol, which hides a user's rating from other parties, while updating the rating \textit{correctly} through a central service provider. We show that \protocol{} allows for private rating-based matchmaking with reasonable accuracy by implementing and testing the rating update of our approach.

\smallskip

\noindent \textbf{Contributions.} \enskip The main contributions of this paper can be summarized as follows:

\begin{enumerate}[\bfseries C1.,nosep]
    \item We propose \protocol{}, one of the first private and secure rating system-based matchmaking protocols. We achieve this by making use of FHE to allow blind calculations of a rating update after a number of matches. Our construction uses lattice-based cryptography and therefore achieves \textit{post-quantum} security. Additionally, our construction leverages ZKPs to reveal the minimum information needed to allow proper matchmaking, while keeping the actual ratings private.  
    \item We provide an extensive security analysis, which proves, that \protocol{} holds security notions for providing fair matchmaking as well as keeping the true rating hidden from a semi-honest adversary.
    \item We implement the FHE rating update of \protocol{} and run benchmarks on the implementation. Through our evaluation we analyse the runtime and memory costs of \protocol{} as well as compare the accuracy of the rating update to plaintext computations. We show that our approach has reasonable runtime results for users using standard HE security parameters. Additionally, we show that our approach provides negligible deviation from the exact rating calculated on plaintext even after~10,000 consecutive updates. 
    \item Finally, we study existing works in the area of private matchmaking and identify that only Matchmaking Encryption (ME)~\cite{ateniese2021match} can be compared, in a meaningful and fair way, to this work. To this end, we provide the first systematic comparison between ME and an FHE-based, post-quantum-secure matchmaking protocol, highlighting differences in trust assumptions, security guarantees, and performance.
\end{enumerate}

\paragraph{Organization}
The rest of the paper is organized as follows. In \Cref{sec:app-domain} we cover the possible usecases of our approach covering the areas where \protocol{} can be used. 
In \Cref{sec:rel-work} we cover the related work in the field, covering some of the rating systems that exist as well as private and secure matchmaking solutions currently available in literature. In \Cref{sec:preliminaries} we introduce the cryptographic primitives which we make use of in our construction, and in \Cref{sec:protocol-cons} we provide the detailed construction of \protocol{}. Afterwards, \Cref{sec:security} covers the threat model and security analysis of \protocol{}. We implement our benchmarking usecase and cover the results in \Cref{sec:eval-results}. Finally, we provide our conclusions and cover future work in \Cref{sec:conclusions}.

\section{Application Domain}
\label{sec:app-domain}

Matchmaking is a core system in the current digital world, allowing two or more users to match with each other through common interests in social media or by having similar skill levels in online games. However, as these systems evolve and become more complex the inner workings of them become inaccessible or incomprehensible to users, all the while requiring large amounts of personal data or user activity to operate. As such, it is natural for people to want to protect sharing their private information with matchmaking services while retaining the functionality of these services. 

Through \protocol{}, we aim to provide similar functionality as the plaintext implementation by matching users with similar skill levels based on rating systems, while keeping the rating secure on the client side. Consequently, such functionality would allow various real world usecases to be implemented in a more privacy-conscious way. These use-cases could include: \begin{enumerate*}[\itshape (i)] 
    \item implementing social media functionalities, such as forming professional or personal connections with people, without directly sharing their personal data to the service provider; 
    \item implementing online game matchmaking, where a user can know their true rating and have fair matchmaking by avoiding selective pairing;
    \item implementing reputations systems in marketplaces, where users and seller may vote on each others reputation without revealing the exact reputation but a generalization of how the seller or buyer is perceived. 
\end{enumerate*}
\section{Related Work}
\label{sec:rel-work}


\subsection{Elo and Traditional Rating Systems} 
\label{subsec:rw-mm}

In the 1950s, Arpad Elo developed the fundamental theory, which eventually became the U.S. Chess Federations (USCF) rating system. This version as well as a modification of it designed by Glickman~\cite{glickman1995comprehensive} is still used to this day in modern chess skill rating. This system adopted the name of the founder and is known as the "Elo rating system"~\cite{glickman1999rating}. The goal of the system is to calculate the likelihood of a player winning against their opponent and compare it to the actual result of the game. If a player with a lower rating wins against an opponent with a higher one, then the player "beat the odds" and gains a larger increase in their rating. On the other hand, the loss of rating is smaller if the likelihood to win was lower.


In addition to the Elo rating system, other rating systems have also been adopted throughout history, such as Glicko~\cite{glickman2012example}, which introduce a variable for uncertainty in the player's skill. More specifically, uncertainty shows the possible deviation of the player's skill, where a player with high uncertainty will lose or gain rating faster than those with low uncertainty.

To the best of our knowledge, there are no works analysing secure utilisation of rating systems, however there are works analysing how to perform secure matchmaking through Functional Encryption (FE) constructions, such as Attribute-based Encryption (ABE) and 
ME, or through Private Set Intersection (PSI) solutions.

\subsection{Functional Encryption (FE)}
\label{subsec:rw-fe}

FE refers to public-key encryption schemes, which allows a specific function to be computed on a ciphertext through the use of a decryption key. This term was generalized by Boneh et al.~\cite{boneh2011functional} and ever since then, the field has expanded to different function with the most notable ones being computing dot products with Inner-Product FE (IPFE)~\cite{abdalla2020inner} and computing quadratic equations through Quadratic FE (QFE)~\cite{agrawal2021multi}. More recently, there has been a push by researchers to develop FE constructions that expand the functionality of FE to allow qualitative analysis of encrypted data~\cite{nuoskala2024spade} with selective, partial decryption. 


\subsubsection{Attribute-based Encryption (ABE)}
One type of FE constructions focused on fine-grained access control on who would be able to decrypt a ciphertext. This initially started as Identity-Based Encryption (IBE)~\cite{shamir1985identity}, which was later expanded to Ciphertext-Policy ABE (CP-ABE) by Sahai and Waters~\cite{sahai2005fuzzy}. CP-ABE allowed identifying users not just through a string of characters like in IBE, but through a set of descriptive attributes. In turn CP-ABE associated these sets of attributes to a generated decryption key which was then able to decrypt a ciphertext that had an equivalent access policy associated with it.

Later on, researchers constructed Key-Policy ABE (KP-ABE)~\cite{goyal2006attribute} following an inverse construction than CP-ABE, namely where the decryption keys are associated with the access policy, while the ciphertexts are associated with the encrypters attributes. In both KP-ABE and CP-ABE power is given to a single entity (users in CP-ABE or the trusted authority in KP-ABE) who defines who can access what data. Additionally, in both methods a mismatch between the attributes and the policy reveals them to the querier~\cite{attrapadung2015duality}. Due to this, Ateniese et al. proposed Matchmaking Encryption (ME)~\cite{ateniese2021match}.



\subsubsection{Matchmaking Encryption} 
ME is a special case of ABE, where both sender and receiver (who have individual attributes) can set their own required policies for a message to be revealed. More specifically, ME allows for both the sender and the receiver to set their own policies $\mathbb{S}$ and $\mathbb{R}$, respectively. Next, the sender associates his own attributes $\sigma$ as well as policy $\mathbb{S}$ to the ciphertext $c$. Then, a receiver with attributes $\rho$ and policy $\mathbb{R}$ are matched against $\mathbb{S}$ and $\sigma$ in $c$, if all attributes match the policies, then the receiver can decrypt $c$, otherwise an error occurs and no one learns anything aside from the fact that there was a attribute mismatch.

ME gives the ability for both sender and receiver to define their attributes and their own policy. Through this, a ciphertext can be decrypted only when a match occurs and both user attributes match each others policy. Additionally, ME addresses the issue of attribute/policy hiding and prevents this information leak when an attribute mismatch occurs during decryption. 

While ME is a highly important step for providing a solution for secure matchmaking, it has noticeable drawbacks within it. Some of these include the strong assumptions required in constructing ME, making it impractical to implement. While these assumptions can be lowered to a weaker version, they results in a lowered security guarantee which can only support a bounded number of queries. This results in ME not being applicable to situations where users would wish to arbitrarily match with \textit{multiple users.}

\subsection{Private Set Intersection (PSI)}

PSI is a cryptographic method that enables parties to compute the intersection of their private input sets without revealing any information beyond the intersection~\cite{pinkas2018scalable}. 
In a PSI protocol, two or more parties can compute the intersection of their private input sets while ensuring that the elements not in the intersection remain private.
PSI has a wide range of applications, including contact tracing~\cite{wu2023efficient} and
matchmaking applications~\cite{he2022differentially}.

PSI has several variants, each defined by the inputs, outputs, and interactions among the parties involved. Each variant has specific requirements to consider when designing and implementing protocols. For instance, in traditional PSI, one or both parties learn the intersection of their input sets~\cite{pinkas2018scalable}. In contrast, some protocols may compute a \textit{function} based on the intersection without revealing the intersection itself~\cite{garimella2021private}. Additionally, specific protocols may require prior authorization for elements in the sets or impose a threshold on the intersection size before producing the output~\cite{liu2023efficient}. 


Several cryptographic techniques are commonly employed to enhance the efficiency and security of PSI protocols. Examples of cryptographic techniques used include \textit{Hashing}~\cite{pinkas2018efficient}, \textit{Oblivious Transfer}~\cite{rindal2017improved}, \textit{Homomorphic Encryption}~\cite{cong2021labeled}, Garbled Circuits~\cite{pinkas2018scalable} and Oblivious Pseudo-Random Functions~\cite{kolesnikov2016efficient}.
\textit{Hashing} improves efficiency by mapping set items into bins, significantly reducing computational overhead, though it requires careful parameter selection to avoid collisions. \textit{Oblivious Transfer (OT)} enables secure two-party computation, allowing a receiver to obtain one of many data pieces without the sender knowing which was chosen. However, it may require multiple communication rounds~\cite{rindal2017improved}. 
HE allows computations on encrypted data without decryption, enabling complex secure operations at the cost of high computational overhead~\cite{cong2021labeled}. \textit{Garbled Circuits} facilitate secure function evaluation on private inputs but can be less efficient than specialized PSI protocols~\cite{pinkas2018scalable}. Finally, \textit{Oblivious Pseudo-Random Functions (OPRFs)} securely generate pseudo-random values for PSI, balancing efficiency and security while requiring careful design to resist attacks.

PSI protocols have evolved since their initial introduction. Performance has been a central focus, as PSI protocols are known to be computationally intensive. Despite progress in performance, these protocols remain complex and require careful instantiation, considering platform-specific characteristics like computation limits, bandwidth, or network delay.

\section{Preliminaries}
\label{sec:preliminaries}

\subsection{Elo Rating System} 

The system aims to calculate the likelihood of a player winning against their opponent and compare it to the result of the game.
Traditionally, a new rating in the Elo system is calculated as such:
\begin{equation}
    R'=R_{prev}+K\cdot(S_{real}-S_{exp})
    \label{eq:elo-rating}
\end{equation}
where $R_{prev}$ and $R'$ refers to the player's previous and new rating, respectively; $K$ is a K-factor, used for linear score adjustments; $S_{real}$ and $S_{exp}$ are the real and expected outcome of the match for the player. $S_{exp}$ is calculated with the following equation:
\begin{equation}
    S_{exp}=\frac{1}{1+10^{(R_{opp}-R_{player})/400}}
    \label{eq:exp-result}
\end{equation}
where $R_{opp}$ and $R_{player}$ are the Elo ratings of the opponent and player, respectively.

\subsection{Homomorphic Encryption}
\label{subsec:he}

HE relies on the property of homomorphism, more specifically ``privacy homomorphism''. This property allows computation on encrypted data and was initially theorized by Rivest, Adleman, and Dertouzos~\cite{rivest1978data} for the RSA cryptosystem to allow performing arithmetic multiplications on ciphertexts. This type of encryption was further expanded into Fully HE (FHE) by Craig Gentry in~2009~\cite{gentry2009fully} and allowed an unbound number of multiplications and additions on a ciphertext. However, due to the costs imposed by the bootstrapping algorithm, researched turned towards limiting the number operations to avoid computing the bootstrapping step. This led to Levelled HE (LHE) schemes, such as BGV~\cite{brakerski2014leveled}, B/FV~\cite{fan2012somewhat}, CKKS~\cite{cheon2017homomorphic} and some FHE schemes, e.g. TFHE~\cite{chillotti2020tfhe}, to be proposed. The field continued to be developed and allowed LHE schemes to be expanded into FHE schemes, such as in the case of CKKS-RNS~\cite{cheon2018full}, which allowed bootstrapping on the CKKS scheme. 
While there are differences between HE schemes, all of them rely on the same core algorithms, which can be defined as:

\begin{definition}[Homomorphic Encryption]
\label{def:he}
Let $\mathsf{HE}$ for message space $\mathcal{M}$, ciphertext space $\mathcal{C}$ and class of functions $\mathcal{F} : \mathcal{M}^n \to \mathcal{M},\; n\in\mathbb{N}^*$,\; be a homomorphic encryption scheme with a quadruple of PPT algorithms $\mathsf{HE = (KeyGen, }$ $\mathsf{Enc, Eval, Dec)}$ such that:
\begin{itemize}
\item $\mathsf{HE.KeyGen(}1^{\lambda})$ -- a probabilistic algorithm, that takes as input the security parameter $\lambda$, and outputs a public key $\mathsf{pk}$, a private key $\mathsf{sk}$, and an evaluation key $\mathsf{evk}$.
\item $\mathsf{HE.Enc(pk}, x)$ -- a probabilistic algorithm, that takes as input the public key $\mathsf{pk}$ and a message $x\in\mathcal{M}$ and outputs a ciphertext $c\in\mathcal{C}$, where $x$ is encoded into a polynomial representation.
\item $\mathsf{HE.Eval(evk}, f, c_1, \dots, c_n)$ -- a probabilistic algorithm, that takes as input the evaluation key $\mathsf{evk}$, a function $f$, and a set of $n$ ciphertexts, and outputs a ciphertext $c_f$.
\item $\mathsf{HE.Dec}({\mathsf{sk}}, c)$ -- a deterministic algorithm, that takes as input the secret key $\mathsf{sk}$ and a ciphertext $c$, and outputs a plaintext $x$.
\end{itemize}

\underline{Correctness}: For all parameters $\lambda\in\mathbb{N},\, n\in\mathbb{N}^*$, $\left(\pk, \sk, \evk\right)\getsr$ $\HE.\KeyGen\left(1^\lambda\right)$, $f\in\mathcal{F}$ and ciphertexts $c_i \getsr \HE.\Enc\left(\pk,x_i\right)$,\; $x_i\in\mathcal{M}$,\; $i\in[n]$,
\[
\begin{aligned}
    \mathbb{P}[\HE.\Dec\left(\sk,\HE.\Eval\left(\mathsf{evk}, f, \left(c_1, \ldots, c_n\right)\right)\right) 
    \\
    \neq f\left(x_1, \ldots, x_n\right) + \delta] = negl\left(\lambda\right)
\end{aligned}
\]
where $\delta$ is a small value (or noise) that depends on the scaling factor $\Delta\in\mathbb{N}$. If $\HE$ is such that $\delta$ always vanishes, $\HE$ is said to be \textit{exact}. Otherwise, it is called \textit{approximate}.

\end{definition}

\subsubsection{Cheon-Kim-Kim-Song (CKKS) Scheme}

The CKKS scheme~\cite{cheon2017homomorphic}, proposed by Cheon \textit{et al.}, is the first HE scheme which enables encrypted arithmetic computations on approximate numbers. Through the use of CKKS a user can encrypt real numbers ($x\in\mathbb{R}$) and perform standard arithmetic, such as addition, subtraction and multiplication. Additionally, the construction allows for ciphertext rotations without necessitating decryption. The scheme was a major advancement from previous HE schemes, such as BGV~\cite{brakerski2014leveled} and B/FV~\cite{fan2012somewhat}, which allowed the same operations only on integer plaintexts. Similarly as the aforementioned integer HE schemes, CKKS is based on the quantum-resistant, lattice-based Ring Learning with Error (RLWE) problem providing security guarantees against quantum security threats. Also, with advances in CKKS bootstrapping~\cite{cheon2018full}, it allows high precision computations without having to rely on re-encryption to reduce noise while having lower latency than bootstrapping with the BGV scheme~\cite{al2023demystifying}.

Similarly to other HE schemes, CKKS ciphertexts accumulate noise after each arithmetic operations. As such, CKKS employs three techniques for noise management, namely modulus switching, scaling and relinearization. Modulus switching is a technique that changes the modulo of a ciphertext $ct$ from $q$ to a value of $q'<q$. This avoids quadratic noise blowup and keeps the noise level constant after multiplications~\cite{kim2021revisiting}. 
In addition to modulus switching, CKKS has to perform rescaling. Namely, as CKKS ciphertext are multiplied by a scale factor $\Delta$, this leads to multiplications of two ciphertext increasing the factor exponentially. As this results in modulus overflows and diminishes the precision, CKKS requires rescaling by dividing the ciphertext by $\frac{1}{\Delta}$ after each multiplication~\cite{agrawal2023high}. 
Lastly, relinearization is a technique to reduce the degree of a ciphertext, namely multiplications increases the degree of the ciphertext from degree-1 to degree-2, which increases the size of the ciphertext making future operations more expensive~\cite{kim2021revisiting}. Through relinearization this degree is once again reduced to a lower value by using a relinearization key $\mathsf{rlk}$.


Another important functionality of CKKS is batching~\cite{halevi2020design}, which allows encrypting multiple data points with a single encryption (i.e. encrypting an array of data points) and performing computations on each point at the same time. With this functionality CKKS can utilize Single Instruction, Multiple Data (SIMD) parallelism, which increases the speed as multiple operations can be done simultaneously. On the other hand, CKKS also supports sparse packing~\cite{cheon2018bootstrapping}, which reduces the amount of slots available to encode plaintext values into a ciphertext. While this does not provide benefits for reducing computation time as in the case of batching, however, it reduces the memory footprint of the cryptographic context.

\subsection{Range Proof with Commitment-Ciphertext Consistency}
\label{subsec:bg-zkp}
Zero-Knowledge Proofs (ZKPs) are cryptographic protocols, which allow a prover to convince a verifier that a statement is true without revealing any information beyond the validity of that statement. Since the seminal work of Goldwasser \textit{et al.} in ~\cite{goldwasser2019knowledge}, ZKPs have become a cornerstone of modern cryptography and are widely used in privacy-preserving authentication and identification~\cite{rosenberg2023zk}, machine learning~\cite{weng2021mystique}, and verifiable computation~\cite{ben2019aurora} protocols. 

Classical ZKPs are interactive: the prover and the verifier exchange multiple messages. In many practical settings, particularly those require public verifiability or verification in an asynchronous manner, such interaction is undesirable. This motivates the study of Non-Interactive Zero-Knowledge Arguments\footnote{When a proof system is only computationally sound, it is called an argument.} (NIZKs), in which a single proof is generated and can be verified without interaction. NIZKs are commonly obtained by applying the Fiat-Shamir heuristic~\cite{FS86} to an interactive public-coin proof system, or designed directly in the common reference string (CRS) model. In this work, we require the NIZKs to be knowledge sound (extractable) and therefore they are NIZK arguments of knowledge. For simplicity, we still use NIZK to denote such knowledge sound arguments. 


Let $\R \subseteq \{0,1\}^* \times \{0,1\}^*$ be an NP relation and let $\LL=\{x\ |\ \exists \ \omega: (x, \omega) \in \R\}$ be its corresponding language. 
Formally, a NIZK is defined as follows. 

\begin{definition}[Non-Interactive Zero-Knowledge Argument, NIZK]
A NIZK for a relation $\R$ consists of a triplet of PPT algorithms $\NIZK = (\Setup, \Prove, \Verify)$, where:

\begin{itemize}
    \item $\NIZK.\Setup(1^\lambda, \R) \rightarrow (\pk, \vk)$: takes as input the security parameter $\lambda$ and the NP relation $\R$, and outputs a proving key $\pk$ and a verifying key $\vk$. 
    \item $\NIZK.\Prove(\pk,x,\omega) \rightarrow \pi$: takes as input the proving key $\pk$, a statement $x$ and a witness $\omega$, and outputs a proof $\pi$.
    \item $\NIZK.\Verify(\vk,x,\pi) \rightarrow \{0,1\}$:  takes as input the verifying key $\vk$, a statement $x$ and a proof $\pi$, and outputs $1$ if the proof is valid for $x$ and $0$ otherwise. 
\end{itemize}
\label{def:zkp}
\end{definition}

Additionally, a NIZK argument system must satisfy the following properties: 
\begin{enumerate}
    \item \textbf{Completeness.} For all $(x,\omega) \in \R$, an honest prover using $\pk$ produces a proof $\pi$ such that the verifier using $\vk$ accepts with probability $1$.
    \item \textbf{Zero-knowledge.} There exists a PPT simulator, which given $(\pk,\vk)$ but without access to witnesses, can generate proofs that are computationally indistinguishable from real proofs produced with witnesses. 
    \item \textbf{Knowledge Soundness (extractability).} If any PPT adversary can produce a valid proof for a statement $x$ with non-negligible probability, then there exists an PPT extractor such that it is able to recover a valid witness for $x$ given access to the adversary except with negligible probability. 
\end{enumerate}

Range proofs are a specialized class of ZKPs in which a prover convinces a verifier that a secret value, which may be presented in either commitment-based or encryption-based form, lies within a prescribed interval without revealing any additional information about the value itself. Such proofs are fundamental to privacy-preserving financial applications, access control mechanisms and various consistency checks, where one needs to ensure the secret values respect some certain range. 

Although range proofs can be constructed in interactive form, in this work we only focus on the non-interactive setting, since our construction requires proofs that can be generated once and subsequently verified without any further interaction. Accordingly, throughout this paper we invoke the standard NIZK algorithms directly when employing range proofs in our protocols. Additionally, knowledge soundness guarantees that any prover that produces a valid proof must know a witness consistent with the relation. It is crucial for binding the underlying value of the ciphertext and the subsequent homomorphic computation over the ciphertext. 

In order to present the definition of range proofs, we rely on a commitment scheme $\Com=(\KeyGen,\Commit,\Open)$ that is both \textit{binding} and \textit{hiding}. 

\begin{definition}[Range Proof]
    Let $\ppc$ denote the commitment parameters and $r$ denote the randomness used for generating the commitment. Given integers $a,b,v \in \mathbb{Z}$ with $a\leq v<b$ and a commitment scheme $\Com=(\KeyGen,\Commit,\Open)$, we define the following NP relation:
    \[
    \begin{aligned}
        \RR = \{(x,\omega)\ |\ & x = (\ppc,C,a,b),\ \omega = (v,r), \\
        & C = \Commit_\ppc(v;r) \wedge a \leq v < b \}.
    \end{aligned}
    \]
    A (non-interactive) range proof is a NIZK argument for the relation $\RR$.
\end{definition}

In addition to prove that a committed Elo value lies in a certain interval, we still need to bind this value to the homomorphic ciphertext that the server will later update. Otherwise, a malicious user could present a valid range proof for some specified value while providing a ciphertext that encrypts a different value to bypass the rank restrictions. 

Following the commitment and proof of equivalence paradigm in \cite{LindellN18} 
but adapting it to our setting, we define the following NP relation $\R_{rccc}$:

\begin{definition}[Range Proof with Commitment-Ciphertext Consistency]
    Let $r,r'$ denote the randomness used for generating the commitment and encryption, respectively. Given integers $a,v,b \in \mathbb{Z}$ with $a\leq v<b$, a commitment scheme $\Com=(\KeyGen,\Commit,$ $\Open)$ and an homomorphic encryption scheme $\HE=(\KeyGen, \Enc, $ $\Eval, \Dec)$, a non-interactive Range proof with Commitment Ciphertext Consistency is a NIZK argument for the NP relation $\R_{rccc}$.
\[
    \begin{aligned}
    \R_{rccc} = \{ (x,\omega) \ |\ & x = (\ppc, \pk, C, c,  a, b),\ \omega = (v, r, r'), \\
    & C = \Commit_\ppc(v;r) \wedge a \leq v < b \wedge c = \Enc_\pk(v;r') \}.
\end{aligned}
\]
\end{definition}

Essentially, the NP relation defined above only rely on the encryption algorithm $\Enc$ and therefore could be instantiated with an arbitrary public key encryption scheme. However, in our setting the ciphertext is not merely an encryption of the committed value and it could be involved in a sequence of homomorphic updates carried out by the server. For this reason, we explicitly denote it as $\HE$, even though the evaluation algorithm $\Eval$ is not involved in the relation. This choice ensures the minimal requirement: it only checks the consistency of the commitment and the ciphertext. 

\section{\protocol~Construction}
\label{sec:protocol-cons}

In this section we present the construction of our FHE-based matchmaking protocol using the Elo rating system. We cover some additional notation we use in \autoref{app:notation}. Additionally, we provide a high-level overview and a visualisation of the protocol in \autoref{fig:high-level}. 

\begin{figure*}[t]
    \centering
    \includegraphics[width=.95\linewidth]{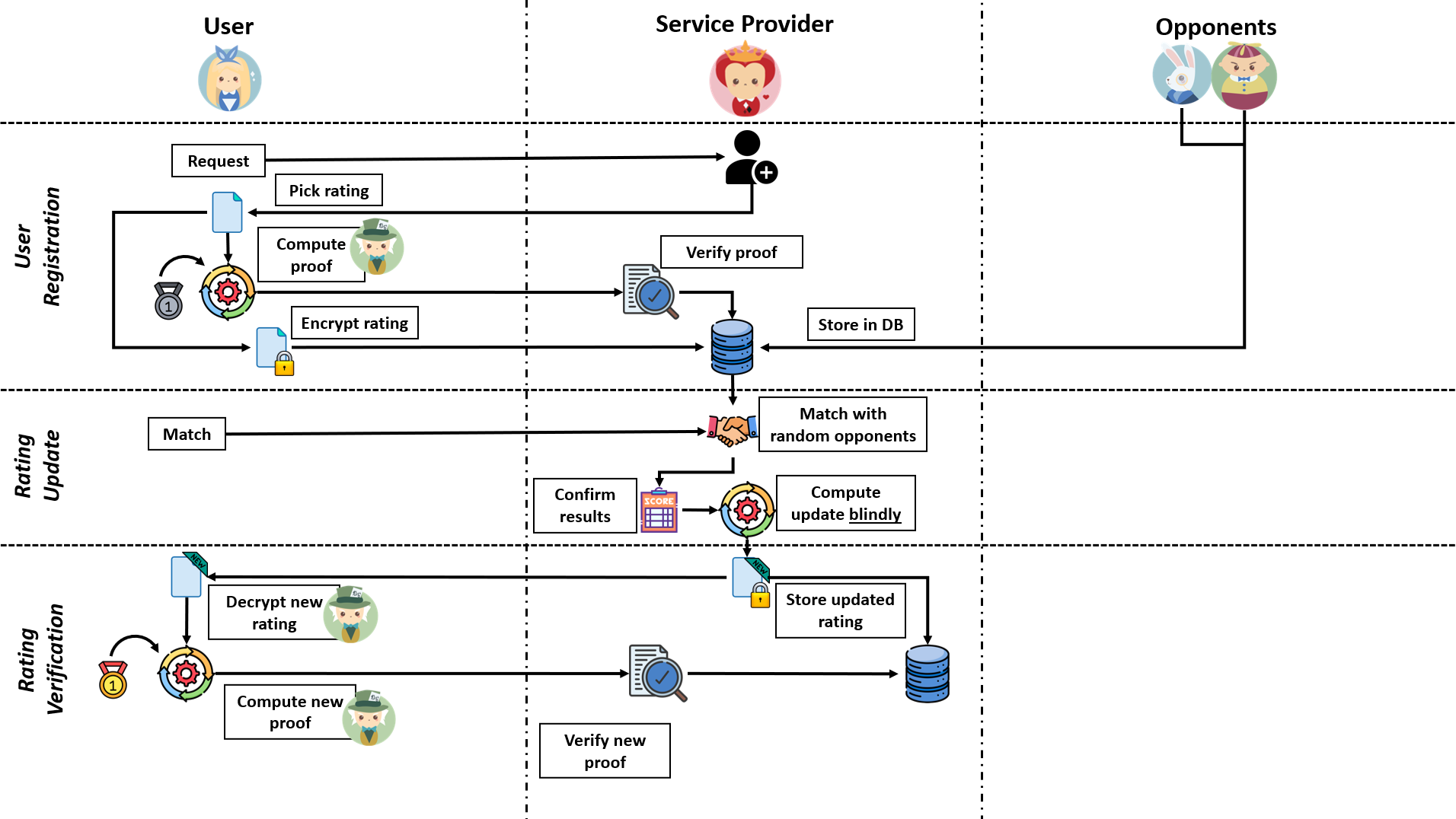}
    \caption{High-level Overview of \protocol{}}
    \label{fig:high-level}
\end{figure*}

\subsection{System Model}
\protocol{} consists of three entities: \begin{enumerate*}[\itshape (i)]
\item Service Provider (\CSP{}),
\item Clients ($\mathcal{U}$),
\item Trusted Key Curator (\KC{}).
\end{enumerate*}
\begin{itemize}
    \item Service Provider (\CSP{}) -- \CSP{} hosts the matchmaking service. They define the application and is responsible for initiating the match, verifying the outcome of a match and securely calculating the rating of the users after the match. Additionally, the \CSP{} provides the initial rating of each client and registers them to the system.
    \item Clients ($\{u_1,\ldots,u_n\} \in\mathcal{U}$) -- each client $u_i\in \mathcal{U}$ knows their rating and can prove their rank after a rating update. The clients wish to match with other similarly rated clients.
    \item Trusted Key Curator (\KC{}) -- \KC{} generates the public HE keys which are used by $\mathcal{U}$ in \protocol{}. In this case, \KC{} would decrypt and share the results with $\mathcal{U}$ after a rating update. The operations of the \KC{} can be executed on a Trusted Execution Environment (TEE).
\end{itemize}

\subsection{High-level Overview}

\protocol{} is executed in three parts: \begin{enumerate}
    \item \textit{\textbf{User Registration}} -- is comprised of \Cref{alg:init} and \Cref{alg:reg} and is run once by the parties. The $\mathbf{Init}$ algorithm generates the respective keys and public parameters used throughout \protocol. The $\mathbf{Register}$ algorithm allows $u_i$ to choose their initial rating, encrypt it with HE and compute the proof of their rank and sends the proof and ciphertext to \CSP. If the proof is valid, then $u_i$ is registered to the system and can match with other users $u_j$, otherwise the registration is aborted.
    \item \textit{\textbf{Rating Update}} -- is done through \Cref{alg:update} and \Cref{alg:announce}. After $u_i$ concludes $n$ matches, \CSP{} retrieves $u_i$'s encrypted rating as well as the ratings of his $n$ opponents and computes the new rating homomorphically following \Cref{eq:elo-rating} by running the $\mathbf{Update}$ algorithm. The phase end by the \KC{} revealing the updated rating to $u_i$ through the $\mathbf{Announce}$ algorithm.
    \item \textit{\textbf{Rating Verification}} -- is comprised of \Cref{alg:attest} and \Cref{alg:verify}. Upon receiving the new rating, $u_i$ computes a new commitment and ciphertext through the $\mathbf{VerifyNew}$ algorithm, which are sent to \KC. Afterwards, \KC{} runs the $\mathbf{Attest}$ algorithm to open the commitment and verify if the commitment is consistent with its' local record, then \KC{} attests the ciphertext and commitment with a signature. Finally, $u_i$ generates a new proof and sends the proof to the \CSP. If \CSP{} can verify the proof, then he updates the local record and allows $u_i$ to continue matchmaking with others.
\end{enumerate} 

\subsection{Construction}

\protocol{} is executed in three parts \begin{enumerate*}[\itshape (i)]
    \item \textit{User Registration},
    \item \textit{Rating Update},
    \item \textit{Rating Verification}.
\end{enumerate*} 
The \protocol{} protocol is constructed with six algorithms $\mathbf{\protocol =}$ $\mathbf{(Init, Register, Update, Announce, Attest,}$ $ \mathbf{VerifyNew)}$, which can be defined as follows: 
\begin{definition}[\protocol{}]
Let $\mathcal{L}$ denote a predefined set of ranks, where each rank $l_i \in \mathcal{L}$ have a defined minimum $l_i^{min} \in \mathbb{Z}$ and maximum $l_i^{max}\in \mathbb{Z}$ values required for the associated rank. Let $\mathsf{HE}$ define a homomorphic encryption scheme and $\lambda$ define the security parameter. Additionally, let $\mathsf{NIZK}$ define a non-interactive zero-knowledge proving system, $\mathsf{Com}$ a commitment scheme and $\Sigma$ a digital signature scheme.
Then \protocol{} consists of a 
tuple of PPT algorithms \protocol~$\mathbf{= (Init, Register, Update, Announce, Attest, VerifyNew)}$, where:
\begin{itemize}
    \item \protocol$\mathbf{.Init}(1^\lambda,\mathcal{R}_{rccc}) \rightarrow (\vk_{\mathbf{KC}}, \{pk_i\}_{i \in \mathcal{U}}, \ppc, \pk_{\mathbf{SP}}, \vk_{\mathbf{SP}})$ -- takes as input the security parameter $\lambda$ and an NP relation $\mathcal{R}_{rccc}$ and outputs \textbf{KC}'s verification key $\vk_{\mathbf{KC}}$, the public HE keys $pk_i$ of $u_i$, as well as the commitment public parameters $\ppc$ and proving $\pk_{\mathbf{SP}}$ and verification keys $\vk_{\mathbf{SP}}$ used for $\mathsf{NIZK}$ range proofs.
    \item \protocol$\mathbf{.Register}(R_{init},l_{init},\mathsf{pk}_{\mathbf{SP}}) \rightarrow \pi_{i_0}$ -- takes as input the initial rating $R_{init} \in \mathbb{Z}$, and initial rank $l_{init}$ and outputs the initial proof $\pi_{i_0}$ and registers $u_i$ to the system.
    \item \protocol$\mathbf{.Update}(c_i,c^{j}_{opp},n,S_{real},K) \rightarrow c'_i$ -- takes as input $u_i$'s encrypted rating $c_i$, the number of matches $n>2$, the encrypted ratings of $u_i$'s opponents $c^{j}_{opp}$, where $j\in[0,\ldots,n]$, summed match outcome $S_{real}$, and K-factor $K$ and outputs $u_i$'s encrypted updated rating $c'_i$.
    \item \protocol$\mathbf{.Announce}(c_i, \sk_i) \rightarrow R'_i$ -- takes as input the $u_i$'s encrypted rating $c_i$, and the secret key $\sk_i$ and output $u_i$'s plaintext Elo value $R'_i$.  
    \item \protocol$\mathbf{.Attest}(c'_i,\mathsf{sk}_{\mathbf{KC}}) \rightarrow \sigma$  -- takes as input $u_i$'s encrypted new rating $c'_i$ and the \textbf{KC}'s signing key $\mathsf{sk}_{\mathbf{KC}}$ and outputs a signature $\sigma$ on $u_i$'s commitment.
    \item \protocol$\mathbf{.VerifyNew}(id_i,R'_i,C'_i,r, c''_i, \sigma) \rightarrow \pi_i$ -- takes as input $u_i$'s $id_i$, new rating $R'_i$, the commitment $C'_i$ on $R'_i$, the randomness $r$ used to generate $C'_i$, current encrypted rating $c''_i$, the signature $\sigma$ and outputs the proof $\pi_i$ of $u_i$'s rank.  
\end{itemize}
\label{def:helo}
\end{definition}

\paragraph{\textbf{Algorithms.}} The \textit{User Registration} phase involves generating the required keys for running \protocol{} and registering a new user to the system. In \Cref{alg:init}, \KC{} generates $\vk_{\mathbf{KC}}$ and $\sk_{\mathbf{KC}}$ for $\Sigma$ (line 1). Then \KC{} generates $\pk_i$, $\sk_i$ and $\mathsf{evk}_i$ used for $\mathsf{HE}$ for each user $u_i\in\mathcal{U}$ (lines 2-3). \CSP{} generates the public parameters $\mathsf{pp}$ used in $\mathsf{Com}$ as well as $\pk_{\mathbf{SP}}$ and $\vk_{\mathbf{SP}}$ keys linked to $\R_{rccc}$ used in $\mathsf{NIZK}$ (lines 4-5). All the public keys are shared to other entities (line 6). 
In \Cref{alg:reg}, a new user is registered into the matchmaking system by \CSP{} defining an ID $id_i$ for $u_i$ and $u_i$ obfuscates his rating $R_i$ by some noise $\alpha$ (lines 2-3) and generates the  commitment $C_i$ (line 4), ciphertext $c_i$ (line 5) and proves that rating is within set bounds by the expected rank $l_{init}$, which is shared to \CSP{} (lines 6-9). The \CSP{} verifies the proof and publishes the user $id_i$ and rank $l_i$ (lines 10-14). If the verification fails  \CSP{} aborts the registration (lines 15-16). Once $u_i$ is registered, he may begin using the matchmaking service and matching with $u_j$, where $j \neq i$. After each match $\mathsf{cnt}_i$ is incremented by one.

\begin{algorithm}[ht!]
\caption{$\mathsf{\mathbf{\protocol.Init}}$}\label{alg:init}
\small
        \KwIn{Security parameter $1^\lambda$ and NP relation $\R_{rccc}$.} 
        \KwOut{\KC{}'s verification key $\vk_{\mathbf{KC}}$ of digital signature scheme, public parameters of the commitment scheme $\ppc$, proving key $\pk_{\mathbf{SP}}$ and verification key $\vk_{\mathbf{SP}}$ for the range proof. }
        \nonl\DontPrintSemicolon \KC{} generates the key pair for the digital signature scheme as well as the Homomorphic Encryption scheme. \;
        \PrintSemicolon $(\vk_{\mathbf{KC}}, \sk_{\mathbf{KC}}) \gets \Sigma.\KeyGen(1^\lambda)$\;
        \For{$u_i \in \mathcal{U}$}{
        $(\pk_i, \sk_i,\evk_i) \gets \mathsf{HE}.\KeyGen(1^\lambda)$\;
        }
        \nonl\DontPrintSemicolon \textbf{SP} runs the setup algorithm of the commitment scheme and the range proof. \;
        $\ppc \gets \Com.\KeyGen(1^\lambda)$\;
        $(\pk_{\mathbf{SP}}, \vk_{\mathbf{SP}}) \gets \NIZK.\Setup(1^\lambda, \R_{rccc})$\;
        \Return $(\vk_{\mathbf{KC}}, \{pk_i\}_{i \in \mathcal{U}}, \ppc, \pk_{\mathbf{SP}}, \vk_{\mathbf{SP}})$
\end{algorithm}

\begin{algorithm}[ht!]
\caption{$\mathsf{\mathbf{\protocol.Register}}$}\label{alg:reg}
    \small
    \KwIn{Initial rating $R_{init}$, initial rank $l_{init}$, 
    public parameters 
    $\ppc$, user $u_i$'s public key $\pk_i$, 
    \textbf{SP} proving key $\mathsf{pk}_{\mathbf{SP}}$.}
    \KwOut{Initial proof for $u_i$'s $\pi_{i_0}$.}
    \nonl\DontPrintSemicolon \textbf{SP} assigns ID's to the users who wish to register and sends them $R_{init}$ and their initial rank $l_{init}$.\;
    \PrintSemicolon$id_i \gets \mathsf{Rand(\mathbb{Z}^+)}$\;
    \nonl\DontPrintSemicolon The user adds some noise $\alpha$ to the initial rating and proves that it stays inside the range of the initial rank $l_{init}$.\;
    \PrintSemicolon $\alpha \gets \mathsf{Rand(\mathbb{Z})}$\;
    $R_{i} = R_{init} + \alpha$\;
    $C_{i} \gets \Com.\Commit_\ppc(R_i;r)$\;
    $c_{i} \gets \mathsf{HE.Enc(pk}_i, R_{i};r')$\;
    $x_{i_0} \gets (\ppc, \pk_i, C_i, c_i, l^{min}_{init}, l^{max}_{init})$\;
    $\omega_{i_0} \gets (R_{i}, r, r')$\;
    $\pi_{i_0} \gets \NIZK.\Prove(\pk_{\mathbf{SP}}, x_{i_0}, \omega_{i_0})$\;
    \Return $(id_i, x_{i_0}, \omega_{i_0}, \pi_{i_0})$\;
    \nonl\DontPrintSemicolon On receiving $(x_{i_0}, \pi_{i_0})$ from $u_i$, \textbf{SP} verifies the proof and publishes the user and the rank if the proof is valid.\;
    \PrintSemicolon$b \gets \NIZK.\Verify(\vk_{\mathbf{SP}}, x_{i_0}, \pi_{i_0})$\;
    \eIf{$b == 1$}
        {
        $\mathsf{cnt}_i = 1$\;
        \textbf{SP} records $(id_i, l_i, c_i, \mathsf{cnt}_i)$\;
        \Return $b$ and publish $id_i$ and $l_i$.}
        {\Return $\perp$ and abort the registration.}
\end{algorithm}

The \textit{Rating Update} phase consists of the $\mathbf{Update}$ algorithm, that runs when $\mathsf{cnt}_i == n$. The steps for updating $u_i$'s rating are defined in~\Cref{alg:update}. In this case \CSP{} evaluates the expected result of each match up to $j==n$ and creates a sum of all expected results $c_{exp-sum}$ (lines 1-7) as defined in \Cref{eq:exp-result}. Afterwards the update process follows \Cref{eq:elo-rating}. It is important to note, that \CSP{} evaluates the bootstrapping circuit on the resulting ciphertext after subtracting $c_{exp-sum}$ from $S_{real}$ to refresh the accumulated noise to an acceptable level (lines 8-9). The bootstrapping step is done here due to the plaintext values being the lowest at this point of the protocol, allowing the noise to be refreshed while retaining high precision. After the updated rating $c'_i$ (lines 10-11) is computed it is sent to \KC{} to be announced to $u_i$ as shown in~\Cref{alg:announce}.

\begin{algorithm}[ht!]
\caption{$\mathsf{\mathbf{\protocol.Update}}$}\label{alg:update}
    \small
        \KwIn{$u_i$'s encrypted current rating $c_{i}$, opponents encrypted ratings $c^{j}_{opp}$, number of matches before an update $n$, summed outcome of the matches $S_{real}$, K-factor $K$.}
        \KwOut{$u_i$'s encrypted updated rating $c'_{i}$.}
        \nonl \DontPrintSemicolon The server homomorphically calculates the expected match results $c_{exp-sum}$:\;
        \PrintSemicolon $c_{1} \gets \mathsf{HE.EvalMul}(c_{i},0.0025)$\Comment*[r]{$\frac{1}{400}=0.0025$} 
        $c_{exp-sum} = 0$\;
        \For{$j < n$}{
        $c_{2} \gets \mathsf{HE.EvalMul}(c^{j}_{opp},0.0025)$\; 
        $c_{3} \gets \mathsf{HE.EvalSub}(c_2,c_1)$\Comment*[r]{$c_3 = (c_2 - c_1)\div 400$}
        $c_{exp} \gets \mathsf{HE.EvalPoly}(c_{3},\frac{1}{1+10^{c_3}})$\;
        $c_{exp-sum} \gets \mathsf{HE.EvalAdd}(c_{exp-sum},c_{exp})$\;
        }
        \nonl \DontPrintSemicolon The server homomorphically calculates the updated rating and sends it back to the player.\;
        \PrintSemicolon$c_{4} \gets \mathsf{HE.EvalSub}(S_{real},c_{exp-sum})$\; 
        $c'_{4} \gets \mathsf{HE.EvalBootstrap}(c_{4})$\;
        $c_{5} \gets \mathsf{HE.EvalMul}(K,c_{4})$\;
        $c'_{i} \gets \mathsf{HE.EvalAdd}(c_{i},c_{5})$\Comment*[r]{$c'_{i} = c_{i} + K \cdot (S_{real}-c_{exp-sum})$}
        \nonl \DontPrintSemicolon \Return $c'_{i}$
\end{algorithm}

The final phase in \protocol{} is the \textit{Rating Verification}, where $u_i$ receives his new ranking and must generate a new proof $\pi_i$ of their rank $R'_i$ and have it attested by \textbf{KC}. In \Cref{alg:attest}, $u_i$ generates a commitment $C_i$ of his rating $R_i$ (line 1), if \KC{} can open the commitment, then \KC{} attests the commitment and sends the signature to $u_i$, otherwise the attestation fails (lines 2-5). In \Cref{alg:verify}, $u_i$ generates a ciphertext $c'_i$ of their rating and proves their rating with $\pi_i$ following the same steps as in \textit{User Registration} (lines 1-4). If \CSP{} can verify the proof, then the rank and ciphertext are updated, otherwise the verification is aborted (lines 5-10).


\begin{algorithm}[ht!]
\caption{$\mathsf{\mathbf{\protocol.Announce}}$}\label{alg:announce}
    \small
        \KwIn{$u_i$'s encrypted rating $c'_i$, \KC{}'s decryption key $\sk_i$.} 
        \KwOut{$u_i$'s rating $R_i$}
        \nonl\DontPrintSemicolon For every $n$ games, \CSP{} will send the latest encrypted rating $c'_i$ for $u_i$ to \KC{}. On receiving the encrypted contents, \KC{} decrypts $c'_i$ with the use of $\sk_i$ and sends it to $u_i$. \;
        \PrintSemicolon$R'_{i} \gets \mathsf{HE.Dec(sk}_i,c'_i)$\;
        \Return $R'_i$ and record 
        it for future attestation. 
\end{algorithm}

\begin{algorithm}[ht!]
\caption{$\mathsf{\mathbf{\protocol.Attest}}$}\label{alg:attest}
    \small
        \KwIn{$u_i$'s rating $R'_i$ and the encrypted rating obtained from \KC{} $c'_i$, \KC{}'s signing key $\sk_{\mathbf{KC}}$.} 
        \KwOut{A signature $\sigma$ on $u_i$'s commitment.}
        \nonl\DontPrintSemicolon The user $u_i$ computes the commitment $C_i$ and sends $(id_i, c'_i, C_{i})$ to \KC{} for attestation. \;
        $C_{i} \gets \Com_\ppc.\Commit(R'_{i};r)$\;
        \nonl\DontPrintSemicolon If $C_i$ can be opened as valid on $R'_i$, which is identical to \KC{}'s local record, and $R'_{i} \geq 0$, \KC{} attests $C_i$ by generating a signature $\sigma$ on $(id_i, c'_i, C_{i})$\;
        \eIf{$\Com_\ppc.\Open(C_{i}, R'_{i}, r) == 1$}{
            \Return $\sigma \gets \Sigma.\Sign(\sk_{\mathbf{KC}}, (id_i, c'_i, C_{i}))$.
            }
        {
            \Return $\perp$ and abort the attestation.
        }
\end{algorithm}

\begin{algorithm}[ht!]
\caption{$\mathsf{\mathbf{\protocol.VerifyNew}}$}\label{alg:verify}
    \small
        \KwIn{$u_i$'s id $id_i$, new rating $R'_{i}$, a commitment $C_i$ on $R'_{i}$ with opening randomness $r$, current encrypted rating $c''_i$ for $R'_i$, and a signature $\sigma$
        } 
        \KwOut{$u_i$'s proof of rank $\pi_{i}$.}
        \nonl\DontPrintSemicolon $u_i$ finds out the expected rank $l_{exp}$ for $R'_i$, then generates the range proof.\;
        $c'_{i} \gets \mathsf{HE.Enc(pk}_i, R'_{i};r')$\;
        $x_{i} \gets (\ppc, \pk_i, C_i, c'_i, l^{min}_{exp}, l^{max}_{exp})$\;
        $\omega_{i} \gets (R'_{i}, r, r')$\;
        $\pi_{i} \gets \NIZK.\Prove(\pk_{\mathbf{SP}}, x_{i}, \omega_{i})$\;
        \nonl\DontPrintSemicolon $u_i$ sends $(c''_i, c'_i, C_i, \pi_i, \sigma)$ to \textbf{SP}. The \textbf{SP} verifies the attestation and the proof. In the case that both are valid and $c''_i$ is identical to the local record $c_i$, \CSP{} then updates $c_i$ and $l_i$ on record.\;
        \PrintSemicolon $b \gets \NIZK.\Verify(\vk_{\mathbf{SP}}, x_{i}, \pi_{i}) \wedge \Sigma.\Verify(\vk_{\mathbf{KC}}, \sigma, (id_i, c''_i, C_i)) \wedge (c_i == c''_i) \wedge (\cnt_i == N)$\;
        \eIf{$b == 1$}{
            $c_{i} \gets c'_{i}$, $l_i \gets l_{exp}$ and $\mathsf{cnt_i} \gets 1$\;
            \Return $(c_i, l_i)$
            }
        {
            \Return $\perp$ and abort the verification.
        }
\end{algorithm}

\section{Security}
\label{sec:security}
In this section, we formally present the security properties achieved by our construction $\protocol{}$. Notably, we stress that the security of the underlying cryptographic primitives, such as the $\mathsf{NIZK}$ proof system and the homomorphic encryption scheme $\mathsf{HE}$, is assumed to hold under their respective assumptions and is therefore out of the scope of this work. 

We begin by introducing two core security notions tailored to our setting:
\begin{enumerate}
    \item \textbf{N-round Fair Matchmaking.} Each user's rating is correctly and fairly updated for every $N$ rounds. It ensures that users are always matched within the correct rank interval; and
    \item \textbf{Hidden Elo.} The service provider learns no more about a user's exact Elo value than what is implied by the validity of the range proofs. 
\end{enumerate}



In earlier sections, we presented \protocol{} core algorithms in minimal form for simplicity, however we expand our presentation in which local states are explicit and algorithm interfaces are associated with the local states. We stress that these changes are regarding presentation rather than semantic. Concretely, a user $u_i$'s local state will become $\st_{u_i} = (id_i, l_i, R_i, c_i, $ $ \sigma)$, where $id_i$ is its identity, $l_i$ is its current rank, $R_i$ is the Elo value, $c_i$ is the encrypted Elo value and $\sigma$ is the latest attestation obtained from \KC{}. For simplicity we consider the randomness used for generating the encrypted Elo value is a part of $R_i$, but when it comes to value of $R_i$ we just omit those randomness. Similarly, we also consider the commitment $C_i$ as a part of $c_i$. Additionally, \KC{}'s local state $\st_{KC} = (\sk_{KC}, \{\st_i\}_{u_i \in \mathcal{U}}, M)$, where $M$ is a list (indexed by users) recording each user's latest Elo value in plaintext form for future attestation. \CSP{}'s local state contains a list $\LLL = \{(id_i, l_i, c_i, \cnt_i)\}_{u_i \in \mathcal{U}}$ of per-user entries to record user's current rank and the encrypted Elo value. The counter $\cnt_i$ is used for tracking the number of matches since the last rank update. Some of the protocol algorithms are then understood as sub-protocols that take the relevant local states as inputs and update the corresponding states. This modification makes the security definitions more general, since they are no longer hard-wire our concrete H-Elo construction. Our adversary model also becomes strictly stronger in an explicit way. Because upon an adaptive takeover of a user, the adversary should now learn about the session related information (e.g. the randomness for commitment and encryption and the most recent attestation).

We first formalise the security notion of \textbf{N-round Fair Matchmaking}. In a system where user ranks are refreshed every N games, rank updates should correctly reflect the user's current Elo value. Under a semi-honest \CSP{}, this rank correctness condition suffices to guarantee fair matchmaking, i.e., users are paired with opponents of the same rank whenever matchmaking is invoked between rank refresh events. Let $\LL$ denote the predefined rank set, which is assumed to be valid in the sense that every admissible Elo value lies in exactly one rank interval. Let $\IsValid$ be a predicate used by the experiment to reject invalid inputs, with $\IsValid(R_{init}, l_{init})=1$ indicates the initial Elo value and the rank are valid and $\IsValid(S_{real})$ = 1 indicates that the summed match outcome is valid. 


The security property is captured via the following game-based experiment in which an adversary attempts to make the server record an incorrect rank for some user while the challenger executes the protocol. The challenger runs $\mathsf{Init}$ with the security parameter $1^\lambda$ and $\R_{rccc}$, which is the range proof relation that enforces commitment-ciphertext consistency. It also initialises three lists. The list $\LLL$ and $M$ mirror those in \CSP{}'s and \KC{}'s local states, but are made explicit in the experiment for clarity. $\Cr$ records the corrupted users. The challenger then hands the public parameters $\pp$ to the adversary $\A$, who then interacts with the oracles shown in \autoref{fig:foracles} 

The registration oracle $\oRegister(i, \st^*_u, R_{init}, l_{init})$ registers user $u_i$ with the initial Elo $R_{init}$ and rank $l_{init}$. If $u_i$ is corrupted, the oracle runs $\mathsf{Register}$ using the local state specified by the adversary $\st^*_u$. Otherwise it uses the local state of $u_i$ which is maintained by the experiment instead. Upon successful run, $\LLL$ is updated accordingly and the handle and returns $u_i$ to indicate that. If $u_i$ is corrupted, the oracle also releases the user's updated local state to $\A$. 

The update oracle $\oUpdate(i, j, S_{real})$ processes the released sum of match result $S_{real}$ for the user $u_i$ and $u_j$. If $\IsValid(S_{real})=1$, each user's game counter $\cnt$ is incremented by $1$ and $\mathsf{Update}$ is run on their encrypted Elo values. Whenever either user reaches the $N$-th game, the oracle invokes $\mathsf{Announce}$ to let \KC{} reveal the latest Elo value to that user. The plaintext Elo value will be recorded in $M$ for future attestation and the counter is reset to $1$. Because both the \KC{} and \CSP{} are semi-honest, their mutual interaction is reliable and outside the adversary's influence, the announcement is therefore modelled as part of $\oUpdate$. If either user is corrupt, the corresponding local-state update message is provided to $\A$.

The oracle $\oVerifyNew(i, \st^*_u)$ allows the adversary to request for a rank refresh of a specified user $u_i$. The oracle runs $\mathsf{VerifyNew}$ on $u_i$'s local state. If $u_i$ is corrupted, it uses the local state provided by the adversary instead. Since $\mathsf{VerifyNew}$ works only when the counter $\cnt_i == N$, $\cnt_i$ will be reset to $1$ on the successful run of the algorithm. After that, the $\oVerifyNew$ returns the encrypted Elo value and the updated rank. If $\A$ queries $\oCorrupt$ for taking over user $i$, $u_i$ will be added to the list $\Cr$ and the oracle returns the internal state of $u_i$. 

The adversary wins by specifying a user $u_i$ (and optionally a state $\st^*_u$). The challenger then executes $\mathsf{VerifyNew}$ using either the experiment's local state for $u_i$ or $\st^*_u$ if $u_i$ is corrupted. On the successful run of the algorithm, a rank $l_i$ can be derived from the outputs. If $l_i$ is different from the actual rank of $u_i$, computed by applying the deterministic function $\mathsf{rank}(\cdot)$ to the user's current Elo value maintained by the game (stored in $M[i]$), then the adversary is considered to win. 

\begin{definition}[N-round Fair Matchmaking]\label{def:fairness}
A private and secure rating system-based matchmaking protocol $\Pi$ is said to preserve $N$-round Fair Matchmaking for a valid predefined set of ranks $\LL$ with a set of user $\mathcal{U}$, for any probabilistic polynomial-time adversary $\A$, the following holds:
    \[
        \Adv^{\text{N-fairness}}_{\Pi, \A}(\lambda) = \Pr[\Exp^{\text{N-fairness}}_{\Pi, \A}(\lambda) \rightarrow 1]
    \]
  is a negligible function in $\lambda$, and the experiment $\Exp^{\text{N-fairness}}_{\Pi, \A}$ is defined as follows: 

\smallskip
  
  \begin{minipage}{\textwidth}
    \begin{small} \begin{tabbing}
        12\=12\=123\=12\=\kill
        \underline{$\mathbf{Exp}^{\text{N-fairness}}_{\Pi, \A}(\lambda)$} \\[3pt]
        \> $M, \LLL, \Cr \gets \bot$\\[1pt] 
        \> $(\st_{SP}, \st_{KC}, \{\st_{u_i}\}_{u \in \mathcal{U}}, \pp) \getsr \protocol{}.\mathsf{Init}(1^\lambda, \mathcal{R}_{rccc})$\\[1pt]
        \> $(\st^*_u, u_i) \getsr \A(1^\lambda, \pp: \mathcal{O}_{f})$ \\[1pt]
        \> If $u_i \not\in \mathcal{U}$ then return $\bot$ \\[1pt]
        \> If $u_i \in \Cr$ then $\st_{u_i} \gets \st^*_u$ \\[1pt]
        \> $\res \getsr \protocol{}.\mathsf{VerifyNew}(\st_{u_i},\st_{SP},\pp)$\\[1pt]
        \> If $\res \neq \bot$ then Parse $\res$ as $(c_i, l_i)$\\[1pt]
        \> If $l_i \neq \mathsf{rank}(M[i])$ then return $1$ \\[1pt]
        \>\> Else return $0$
      \end{tabbing}\end{small}
  \end{minipage}

  \begin{figure}[h]\scalebox{0.9}{
    \framebox{
      \hfill \begin{minipage}{.24\textwidth}
        \begin{small}\begin{tabbing}
            1\=1\=12\=12\=12\=\kill
            \>\underline{$\oRegister(i,\st^*_{u_i},R_{init},l_{init})$} \\[3pt]
            \>\> If $\LLL[i] \neq \emptyset \vee u_i \notin \mathcal{U}$ then return $\bot$\\[1pt]
            \>\> If $\neg \IsValid(R_{init},l_{init})$ then return $\bot$\\[1pt]
            \>\> If $u_i \in \Cr$ then $\st_{u_i} \gets \st^*_u$\\[1pt]
            \>\> $\res \getsr \protocol{}.\mathsf{Register}(\st_{u_i}, \pp)$\\[1pt]
            \>\> If $\res == \bot$ then Return $\bot$\\[1pt]
            \>\> Else Parse $\res$ as $(\st_{u_i}, id_i, l_i, c_i, \cnt_i)$\\[1pt]
            \>\> $\LLL[i] \gets (id_i, l_i, c_i, \cnt_i)$\\[1pt]
            \>\> If $u_i \in \Cr$ then Return $\st_{u_i}$\\[1pt]
            \>\>\> Else return $u_i$\\ \\ 
            
            \>\underline{$\oAttest(i, \st^*_u)$} \\[3pt]
            \>\> If $u_i \notin \mathcal{U}$ then return $\bot$\\[1pt]
            \>\> If $u_i \in \Cr$ then $\st_{u_i} \gets \st^*_u$\\[1pt] 
            \>\> $\res \getsr \protocol{}.\mathsf{Attest}(\st_{u_i},\sk_{KC})$\\[1pt]
            \>\> If $\res == \bot$ then return $\bot$\\[1pt]
            \>\> Parse $\res$ as $(c'_i, \sigma)$\\[1pt]
            \>\> $\st_{u_i} \gets (\cdot, \cdot, \cdot, c'_i, \sigma )$\\[1pt]
            \>\> If $u_i \in \Cr$ then return $\st_{u_i}$\\[1pt]
            \>\>\> Else return $u_i$ \\ \\

            \>\underline{$\oCorrupt(i)$} \\[3pt]
            \>\> If $u_i \notin \mathcal{U} \vee u_i \in \Cr$ then return $\bot$\\[1pt]
            \>\> $\Cr \gets \Cr \cup \{u_i\}$\\[1pt]
            \>\> Return $\st_{u_i}$
          \end{tabbing}
        \end{small}
      \end{minipage} 
      \begin{minipage}{.24\textwidth}
        \begin{small}\begin{tabbing}
            1\=1\=12\=12\=12345\=\kill
            \underline{$\oUpdate(i,j,S_{real})$} \\[3pt]
            \> If $u_i \notin \mathcal{U} \vee u_j \notin \mathcal{U}$ then return $\bot$\\[1pt]
            \> If $\neg \IsValid(S_{real})$ then return $\bot$\\[1pt]
            \> Parse $\LLL[i]$ as $(id_i, l_i, c_i, \cnt_i)$ \\[1pt]
            \> Parse $\LLL[j]$ as $(id_j, l_j, c_j, \cnt_j)$ \\[1pt]            
            \> $c'_i \getsr \protocol{}.\mathsf{Update}(\st_{SP}, \pp, c_i, c_j)$\\[1pt]
            \> $c'_j \getsr \protocol{}.\mathsf{Update}(\st_{SP}, \pp, c_j, c_i)$\\[1pt]
            \>\> $\msg_i \gets i, \msg_j \gets j$\\[1pt]
            \> For $x \in \{i,j\}$:\\[1pt]
            \>\> $\cnt_x \gets \cnt_x+1$\\[1pt]
            \>\> If $\cnt_x == N$ then \\[1pt]
            \>\>\> $R'_x \getsr \protocol{}.\mathsf{Announce}(\st_{KC},c_i)$\\[1pt]
            \>\>\> 
            $M[i]  \gets R'_x$ \\[1pt]
            \>\>\> If $u_x \in \Cr$ then $\msg_x \gets R'_x$\\[1pt]
            \> Return $(\msg_i,\msg_j)$ \\ \\

            1\=1\=12\=12\=12345\=\kill
            \underline{$\oVerifyNew(i, \st^*_u)$} \\[3pt]
            \> If $u_i \notin \mathcal{U}$ then return $\bot$\\[1pt]
            \> If $u_i \in \Cr$ then $\st_{u_i} \gets \st^*_u$\\[1pt]
            \> $\res \getsr \protocol{}.\mathsf{VerifyNew}(\st_{SP},\st_{u_i},\pp)$\\[1pt]
            \> If $\res == \bot$ then return $\bot$ \\[1pt]
            \> Parse $\res$ as $(c'_i,l'_i)$\\[1pt]
            \> $\cnt_i \gets 1$; $\LLL[i] \gets (\cdot, l'_i, c'_i, \cnt_i)$\\[1pt]
            \> If $u\in \Cr$ then return $(c'_i,l'_i)$ \\[1pt]
            \> $\st_{u_i} \gets (\cdot, l'_i, \cdot, c'_i, \cdot)$\\[1pt]
            \> Return $u_i$
          \end{tabbing}
        \end{small}
      \end{minipage}
      }
    }\caption{The oracles $\mathcal{O}_f$ that $\A$ has access to in $\mathbf{Exp}^{\text{N-fairness}}_{\Pi, \A}$.}\label{fig:foracles}
  \end{figure}
\end{definition}

We now present the security notion of \textbf{Hidden Elo}, which ensures that the server cannot learn a user's exact Elo value beyond the rank to which that user belongs. This security property is defined by an indistinguishability-based game between the adversary $\A$ and a challenge that runs the protocol. 

As in the game that defined N-round fair Matchmaking, the adversary here is allowed to interact with the same suite of oracles modelling registration, updates, attestation, rank update, and user corruption (shown in \autoref{fig:horacles}). However, here the adversary will be provided those information available to \CSP{}. This modification reflects that \CSP{} is semi-honest, which means it faithfully follow the protocol specification but it records and inspects any message it receives and may collude with corrupt users. Additionally, the adversary is given access to a challenge oracle $\oChallenge$. Upon receiving a query specifying a target user and a pair of initial Elo values $(R_0, R_1)$ of the same rank, the challenge selects one according to a random bit $b$ chosen at the beginning of the game, and runs the registration algorithm. After that, it continues the protocol thereafter. Notice that the challenge oracle can be called on a user once, since any user in the system cannot register twice. 

At the end of the interaction, $\A$ outputs a guess $b'$ for the random bit. To prevent trivial wins, $\A$ is not allowed to take over any of the challenged users, since such a takeover would reveal the underlying Elo value with the help of \KC{} and thereby invalidate the intended security guarantee. The adversary wins the game if it guesses the bit correctly. Hidden Elo holds if any PPT adversary's advantage in this game is negligible in the security parameter. 

\begin{definition}[Hidden Elo]\label{def:ind}
A private and secure rating system-based matchmaking protocol $\Pi$ is said to preserve Hidden Elo for a valid predefined set of ranks $\LL$ with a set of user $\mathcal{U}$, for any probabilistic polynomial-time adversary $\A$, the following holds:
    \[
        \Adv^{\text{ind}}_{\Pi, \A}(\lambda) = \big| \Pr[\Exp^{\text{ind}}_{\Pi, \A}(\lambda) \rightarrow 1] - \frac{1}{2} \big|
    \]
  is a negligible function in $\lambda$, and the experiment $\Exp^{\text{ind}}_{\Pi, \A}$ is defined as follows: 

\smallskip  

  \begin{minipage}{\textwidth}
    \begin{small} \begin{tabbing}
        12\=12\=123\=12\=\kill
        \underline{$\mathbf{Exp}^{\text{ind}}_{\Pi, \A}(\lambda)$} \\[3pt]
        \> $b \getsr \{0,1\}$; $\LLL, \Cr, \chall \gets \bot$\\[1pt]
        \> $(\st_{SP}, \st_{KC}, \{\st_{u_i}\}_{u \in \mathcal{U}}, \pp) \getsr \protocol{}.\mathsf{Init}(1^\lambda, \mathcal{R}_{rccc})$\\[1pt]
        \> $b' \getsr \A(1^\lambda, \pp: \mathcal{O}_{h})$ \\[1pt]
        \> Return $(b' == b)$
      \end{tabbing}\end{small}
  \end{minipage}

  \begin{figure}[h]\scalebox{0.9}{
    \framebox{
      \hfill \begin{minipage}{.24\textwidth}
        \begin{small}\begin{tabbing}
            1\=1\=12\=12\=12\=\kill
            \>\underline{$\oRegister(i,\st^*_{u_i},R_{init},l_{init})$} \\[3pt]
            \>\> If $\LLL[i] \neq \emptyset \vee u_i \notin \mathcal{U}$ then return $\bot$\\[1pt]
            \>\> If $\neg \IsValid(R_{init},l_{init})$ then return $\bot$\\[1pt]
            \>\> If $u_i \in \Cr$ then $\st_{u_i} \gets \st^*_u$\\[1pt]
            \>\> $\res \getsr \protocol{}.\mathsf{Register}(\st_{u_i}, \pp)$\\[1pt]
            \>\> If $\res == \bot$ then Return $\bot$\\[1pt]
            \>\> Else Parse $\res$ as $(\st_{u_i}, id_i, l_i, c_i, \cnt_i)$\\[1pt]
            \>\> $\LLL[i] \gets (id_i, l_i, c_i, \cnt_i)$\\[1pt]
            \>\> If $u_i \in \Cr$ then Return $\st_{u_i}$\\[1pt]
            \>\>\> Else return $(l_i, c_i, \cnt_i)$\\ \\ 
            
            \>\underline{$\oAttest(i, \st^*_u)$} \\[3pt]
            \>\> If $u_i \notin \mathcal{U}$ then return $\bot$\\[1pt]
            \>\> If $u_i \in \Cr$ then $\st_{u_i} \gets \st^*_u$\\[1pt] 
            \>\> $\res \getsr \protocol{}.\mathsf{Attest}(\st_{u_i},\sk_{KC})$\\[1pt]
            \>\> If $\res == \bot$ then return $\bot$\\[1pt]
            \>\> Parse $\res$ as $(c'_i, \sigma)$\\[1pt]
            \>\> $\st_{u_i} \gets (\cdot, \cdot, \cdot, c'_i, \sigma )$\\[1pt]
            \>\> If $u_i \in \Cr$ then return $\st_{u_i}$\\[1pt]
            \>\>\> Else return $u_i$ \\ \\

            \>\underline{$\oChallenge(i, R_0, R_1, l)$} \\[3pt]
            \>\> If $u_i \notin \mathcal{U} \vee u_i \in \Cr$ then return $\bot$\\[1pt]
            \>\> If $u_i \in \chall$ then return $\bot$\\[1pt]
            \>\> If $\neg \IsValid(R_0, l) \vee \neg \IsValid(R_1, l)$ \\[1pt] 
            \>\>\> Then return $\bot$ \\[1pt]
            \>\> If $\textsf{rank}(R_0) \neq \textsf{rank}(R_1)$ then return $\bot$ \\[1pt]
            \>\> $\res \getsr \protocol.\mathsf{Register}(\st_{u_i}, R_b, l, \pp)$\\[1pt]
            \>\> If $\res == \bot$ then return $\bot$ \\[1pt]
            \>\> Parse $\res$ as $(\st_{u_i}, id_i, l_i, c_i, \cnt_i)$\\[1pt]
            \>\> $\chall \gets \chall \cup \{u_i\}$\\[1pt]
            \>\> Return $(l_i, c_i, \cnt_i)$
          \end{tabbing}
        \end{small}
      \end{minipage} 
      \begin{minipage}{.24\textwidth}
        \begin{small}\begin{tabbing}
            1\=1\=12\=12\=12345\=\kill
            \underline{$\oUpdate(i,j,S_{real})$} \\[3pt]
            \> If $u_i \notin \mathcal{U} \vee u_j \notin \mathcal{U}$ then return $\bot$\\[1pt]
            \> If $\neg \IsValid(S_{real})$ then return $\bot$\\[1pt]
            \> Parse $\LLL[i]$ as $(id_i, l_i, c_i, \cnt_i)$ \\[1pt]
            \> Parse $\LLL[j]$ as $(id_j, l_j, c_j, \cnt_j)$ \\[1pt]            
            \> $c'_i \getsr \protocol{}.\mathsf{Update}(\st_{SP}, c_i, c_j, \pp)$\\[1pt]
            \> $c'_j \getsr \protocol{}.\mathsf{Update}(\st_{SP}, c_j, c_i, \pp)$\\[1pt]
            \> For $x \in \{i,j\}$:\\[1pt]
            \>\> $\cnt_x \gets \cnt_x+1$\\[1pt]
            \>\> If $\cnt_x == N$ then \\[1pt]
            \>\>\> $R'_x \getsr \protocol{}.\mathsf{Announce}(\st_{KC},c_i)$\\[1pt]
            \> Return $(c'_i, c'_j, \cnt_i, \cnt_j)$ \\ \\

            1\=1\=12\=12\=12345\=\kill
            \underline{$\oVerifyNew(i, \st^*_u)$} \\[3pt]
            \> If $u_i \notin \mathcal{U}$ then return $\bot$\\[1pt]
            \> If $u_i \in \Cr$ then $\st_{u_i} \gets \st^*_u$\\[1pt]
            \> $\res \getsr \protocol{}.\mathsf{VerifyNew}(\st_{SP},\st_{u_i},\pp)$\\[1pt]
            \> If $\res == \bot$ then return $\bot$ \\[1pt]
            \> Parse $\res$ as $(c'_i,l'_i)$\\[1pt]
            \> $\cnt_i \gets 1$; $\LLL[i] \gets (\cdot, l'_i, c'_i, \cnt_i)$\\[1pt]
            \> If $u\in \Cr$ then return $(c'_i,l'_i)$ \\[1pt]
            \> $\st_{u_i} \gets (\cdot, l'_i, \cdot, c'_i, \cdot)$\\[1pt]
            \> Return $(c'_i, l'_i)$ \\ \\
            
            \>\underline{$\oCorrupt(i)$} \\[3pt]
            \> If $u_i \notin \mathcal{U} \vee u_i \in \Cr$ then return $\bot$\\[1pt]
            \> If $u_i \in \chall$ then return $\bot$ \\[1pt]
            \> $\Cr \gets \Cr \cup \{u_i\}$\\[1pt]
            \> Return $\st_{u_i}$
          \end{tabbing}
        \end{small}
      \end{minipage}
      }
    }\caption{The oracles $\mathcal{O}_h$ that $\A$ has access to in $\mathbf{Exp}^{\text{ind}}_{\Pi, \A}$.}\label{fig:horacles}
  \end{figure}
\end{definition}

\begin{theorem}\label{thm:fairness}
If the signature scheme $\Sigma$ is EUF-CMA secure and the range proof $\NIZK$ is knowledge-sound, the construction \protocol{} ensures N-round Fair Matchmaking. 
\end{theorem}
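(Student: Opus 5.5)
We argue by a sequence of game hops, starting from $\Exp^{\text{N-fairness}}_{\Pi,\A}$, and bound the probability of the winning event. The winning event is that the final $\mathsf{VerifyNew}$ succeeds and outputs a rank $l_i \neq \mathsf{rank}(M[i])$. For $\mathsf{VerifyNew}$ to succeed, three checks must pass: the range proof $\pi_i$ on $x_i=(\ppc,\pk_i,C_i,c'_i,l^{min}_{exp},l^{max}_{exp})$ must verify; the signature $\sigma$ on $(id_i,c''_i,C_i)$ must verify under $\vk_{\mathbf{KC}}$; and $c''_i$ must equal the record $c_i$ in $\LLL$. Honest users always output the correct rank, by completeness and by \KC{}'s recorded value $M[i]$, together with the validity of $\LL$ (each Elo value lies in exactly one interval). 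So the interesting case is a corrupted $u_i$ supplying $\st^*_u$. We split the winning event into two cases. $E_1$: the tuple $(id_i,c''_i,C_i)$ was never signed by an $\oAttest$ query. $E_2$: it was signed, yet the range statement is false for the value committed in $C_i$.

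\textbf{Game 1 (forgery).} We abort if $E_1$ occurs. A reduction $\B_\Sigma$ against EUF-CMA embeds the challenge verification key as $\vk_{\mathbf{KC}}$. It runs all other parties itself, including \KC{}'s list $M$ and the HE decryption keys, which it generates. It answers every $\oAttest$ query through the signing oracle, after performing \KC{}'s opening check against $M[i]$. If $E_1$ happens, $(id_i,c''_i,C_i),\sigma$ is a valid fresh forgery. Hence $\Pr[E_1]\le \Adv^{\text{euf-cma}}_{\Sigma,\B_\Sigma}(\lambda)$. The same argument covers the intermediate $\oVerifyNew$ calls, since any of them could also embed a forgery.

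\textbf{Game 2 (extraction).} If a signed tuple is used, \KC{} signed it only after checking $\Open(C_i,R'_i,r)=1$ for $R'_i = M[i]$, its current record for $u_i$. We now invoke knowledge soundness. A reduction runs the experiment and, on a valid final $\pi_i$, applies the extractor to get $\omega=(v,r^*,r'^*)$ with $C_i=\Commit_\ppc(v;r^*)$, $l^{min}_{exp}\le v<l^{max}_{exp}$, and $c'_i=\Enc_{\pk_i}(v;r'^*)$. Suppose $v\neq M[i]$. Then $(v,r^*)$ and $(M[i],r)$ are two distinct openings of $C_i$, which breaks binding of $\Com$ (assumed binding in \autoref{subsec:bg-zkp}). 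Otherwise $v=M[i]$, so $l_{exp}=\mathsf{rank}(M[i])$, since ranks partition the admissible values, and the adversary does not win. Therefore $\Pr[E_2]\le \mathrm{negl}(\lambda)$, from the extraction failure probability plus the binding advantage. Summing gives a negligible advantage.

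\textbf{Main obstacle.} The delicate step is keeping $M[i]$ synchronized with what \KC{} actually attested. The $\oUpdate$ oracle overwrites $M[i]$ at each announcement, so a corrupted user could replay an old signature $\sigma$ obtained for a previous Elo value. The check $c''_i==c_i$ against \CSP{}'s current record has to rule this out. To handle it, we argue that the signed message binds $c''_i$, and that \CSP{}'s record $c_i$ is only the latest ciphertext produced by $\mathsf{Update}$. A replayed tuple carries a stale $c''_i$ (equal to the current one only with negligible probability, given fresh HE randomness), and fails unless it is a forgery. A second subtlety is that extraction requires rewinding or a straight-line extractor while the reduction simulates oracles. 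We rely on the assumed (straight-line/CRS) knowledge soundness so that the extractor can be applied to the final proof without disturbing the simulation.
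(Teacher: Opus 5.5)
Your two-case split is the same as the paper's. An accepted tuple $(id_i,c''_i,C_i)$ that \KC{} never signed is an EUF-CMA forgery, and a signed tuple accompanied by a verifying proof for the wrong rank is charged to knowledge soundness. Making binding of $\Com$ explicit is a genuine improvement. Extraction alone only yields \emph{some} in-range opening $(v,r^*)$ of $C_i$, and that contradicts nothing unless $C_i$ cannot also open to \KC{}'s record. So the paper's second case tacitly needs binding, and its bound is missing that term. Binding is not among the theorem's hypotheses, but the paper assumes it for $\Com$ in \autoref{subsec:bg-zkp}. No proof in this experiment is simulated, so your rewinding concern is moot: treat the whole experiment together with $\A$ as a single prover and apply plain knowledge soundness to the final proof.

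The step that fails is your treatment of stale attestations. It needs two things the protocol does not give you. First, that a signed tuple carries a ciphertext the \CSP{} actually produced: $\mathbf{Attest}$ (Algorithm~\ref{alg:attest}) only checks that $C_i$ opens to \KC{}'s current record, and signs whatever ciphertext the user submits. Second, that the \CSP{}'s next record is unpredictable (``fresh HE randomness''): $\mathbf{Update}$ is a CKKS evaluation (constant multiplications, additions, Chebyshev evaluation, bootstrapping). It is deterministic given the evaluation key, which is normally public in FHE, and given its inputs. Those inputs are ciphertexts that colluding corrupted users generated themselves, plus an adversarially chosen $S_{real}$. So before the $N$-th update a corrupted $u_i$ can precompute the record $c^\star$ the \CSP{} will hold afterwards. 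It then obtains $\sigma$ on $(id_i,c^\star,C_{old})$, with $C_{old}$ opening to the then-current value, and later passes every check of $\mathbf{VerifyNew}$ with the old rank. Nothing is forged, and extraction plus binding just recovers the old value.

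The repair is the assumption the paper's proof makes when it says \KC{} ``only attests to commitments and ciphertexts that correspond exactly to those current values''. Concretely, \KC{} signs only if the submitted ciphertext is the one it decrypted in $\mathbf{Announce}$, or at least only if $\HE.\Dec(\sk_i,c'_i)=M[i]$. Then no entropy argument is needed. The checks $c''_i=c_i$ and $\cnt_i=N$ (no update since the last announcement) give $M[i]_{\mathrm{sign}}=\HE.\Dec(\sk_i,c''_i)=\HE.\Dec(\sk_i,c_i)=M[i]_{\mathrm{now}}$ by determinism of decryption, and your binding step closes the case.
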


\begin{proof}
To win the N-round Fair Matchmaking game, the adversary $\A$ must produce a valid range proof together with a valid attestation that jointly authenticate a user's identifier, an encrypted Elo value and the corresponding commitment bound to that value. There are only two possible ways in which the adversary can succeed in doing so:
\begin{enumerate}
    \item the adversary provides a valid attestation on an invalid Elo value, namely, a value that does not match the user's current Elo value as that maintained by \KC{}, or
    \item the adversary presents a valid attestation, but provide it with a range proof that verifies successfully for an invalid Elo value, even though this value is consistent with the commitment and ciphertext specified in the attestation.
\end{enumerate}

We discuss these two cases separately. In the first case, recall that \KC{} is assumed to be semi-honest and therefore maintains an accurate record of each user's current Elo value. By design, it only attests to commitments and ciphertexts that correspond exactly to those current values. Hence, it will never sign or attest to an inconsistent tuple $(id_i, c_i, C_i)$ unless an adversary can forge a valid signature under \KC{}'s signing key. Consequently, any adversary that succeeds in this case can be used to construct a forger that breaks the EU-CMA security of the signature scheme.

For the second case, $\A$ must produce a range proof that verifies under the relation $\mathcal{R}_{rccc}$ for a ciphertext and commitment pair that does not encode a valid Elo value. Because the attestation in this scenario is valid, the only remaining violation would become the ability to construct a convincing proof for a false statement in the underlying NIZK system. Therefore, any adversary that succeeds in this case can be transformed into an adversary that breaks the knowledge soundness property of the NIZK proof system $\NIZK$. 

Bringing them together, these two reductions bound the adversary's overall advantage in the N-round Fair Matchmaking game. Formally, there exists efficient adversaries $\B$ and $\C$ such that
\[
\Adv^{\text{N-fairness}}_{\protocol{},\A}(\lambda) \leq \Adv^{\text{EU-CMA}}_{\Sigma,\B}(\lambda) + \Adv^{\text{soundness}}_{\NIZK,\C}(\lambda),
\]
where $\Adv^{\text{EU-CMA}}_{\Sigma,\B}(\lambda)$ and $\Adv^{\text{soundness}}_{\NIZK,\C}(\lambda)$ denote the advantages in the signature scheme's EU-CMA security game and the range proof's knowledge soundness game respectively. Thus, given that both primitives are secure, the construction \protocol{} preserves the N-round Fair Matchmaking property. 
\end{proof} 

\begin{theorem}\label{thm:ind}
The commitment scheme $\Com$ is computationally hiding, the range proof $\NIZK$ for the relation $\R_{rccc}$ is zero-knowledge, and the homomorphic encryption scheme $\HE$ is IND-CPA secure, the construction \protocol{} ensures Hidden Elo. 
\end{theorem}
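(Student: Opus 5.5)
We prove Theorem~\ref{thm:ind} with a standard hybrid argument. We start from $\Exp^{\text{ind}}_{\protocol{},\A}$ with $b=0$ and move step by step to a game that does not depend on $b$. What $\A$ sees about a challenged user $u_i\in\chall$ consists of three kinds of values, and each is removed by one assumption.
\begin{itemize}
    \item The commitments $C_i$ (folded into $c_i$ in the state presentation) produced at registration and at every rank refresh.
    \item The HE ciphertexts $c_i$ at registration, together with all ciphertexts derived from them through $\mathsf{Update}$ and returned by $\oUpdate$, and the fresh ciphertexts from $\mathsf{VerifyNew}$.
    \item The range proofs $\pi_{i_0},\pi_i$.
\end{itemize}
The published ranks $l_i$ are identical for $R_0$ and $R_1$ at registration, because $\oChallenge$ requires $\mathsf{rank}(R_0)=\mathsf{rank}(R_1)$. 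Plaintext Elo values from $\mathsf{Announce}$ go only to the user and to \KC{}, never to \CSP{}. Since $\oCorrupt$ refuses challenged users, $\A$ never obtains them.

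\emph{Game 1} replaces every proof for a challenged user by the output of the NIZK simulator on the statement $x=(\ppc,\pk_i,C_i,c_i,l^{min},l^{max})$. The zero-knowledge property, applied over a polynomial number of proofs (via a hybrid or multi-theorem ZK), bounds the distance by $q\cdot\Adv^{\text{zk}}_{\NIZK}$. After this step the witness, and hence $R_b$ and the randomness $r,r'$, is used only to form $C_i$ and $c_i$.

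\emph{Game 2} replaces each commitment $C_i$ to a challenged user's value by a commitment to $0$. The reduction to computational hiding can embed the challenge commitment because the randomness $r$ is no longer needed by the simulated proof.

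\emph{Game 3} replaces every fresh ciphertext $\HE.\Enc(\pk_i,\cdot)$ for challenged users by an encryption of $0$. The reduction to IND-CPA needs only $\pk_i$ and $\evk_i$ to run $\mathsf{Update}$ homomorphically, which keeps \CSP{}'s view consistent. \KC{} holds $\sk_i$ for $\mathsf{Announce}$, but those outputs never reach $\A$. Key-setup is per-user, so a hybrid over challenged users applies single-key IND-CPA to each $\pk_i$ in turn. In Game 3 the view is independent of $b$, so $\A$'s advantage there is $0$, giving
\[
\Adv^{\text{ind}}_{\protocol{},\A}(\lambda)\le q_1\Adv^{\text{zk}}_{\NIZK}+q_2\Adv^{\text{hide}}_{\Com}+q_3\Adv^{\text{ind-cpa}}_{\HE}.
\]

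I expect the main obstacle to be the leakage \emph{after} registration. First, the rank $l'_i$ published at a later $\oVerifyNew$ is the rank of the updated rating, and this can differ between $R_0$ and $R_1$ when the two values sit near a rank boundary and their updates diverge. Second, in the reduction to IND-CPA, the simulator no longer knows the plaintexts yet must still produce $l'_i$ and the corresponding statement for the simulated proof. The first thing I would try is to declare the released rank sequence part of the allowed leakage, i.e.\ to interpret ``same rank'' as conditioning on $\A$'s view of ranks. Concretely, the reduction computes $l'_i$ itself: it tracks both candidate rating trajectories in plaintext, since it chose $R_0,R_1$ and sees every $S_{real}$ and the opponents' values. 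The oracle then outputs $\bot$, or the game aborts, whenever the two trajectories leave the same rank. With this convention the ranks are identical in both worlds and the hybrids go through.

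A further subtlety is the CKKS approximation noise. Ciphertexts produced by $\mathsf{Update}$ are still ciphertexts under $\pk_i$ and are never decrypted for $\A$, so IND-CPA covers them. Nothing like IND-CPA$^D$ is needed, because decryption results stay with \KC{} and the user.
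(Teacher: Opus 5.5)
\paragraph{Comparison with the paper's proof.} Your three hybrids are the paper's: simulate the proofs, then swap commitment and ciphertext using hiding and IND-CPA. The paper swaps the ciphertext first and substitutes a random $R^*$ of the same rank, so that every statement stays in $\R_{rccc}$. Your choice of $0$ makes the statements false, which is harmless: after the first hybrid the simulator is only run as a PPT algorithm inside the reductions.

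The real difference is scope. The paper's sketch replaces only registration-time values and then asserts independence from $b$; it never treats what $\oUpdate$ and $\oVerifyNew$ release. Your worry about $l'_i$ is justified, and it is in fact an attack on the game as intended:
\begin{enumerate}
\item take $R_0$ within one winning update of the top of rank $l$, and $R_1$ at its bottom, with rank width exceeding the window's gain;
\item let $u_i$ win every match of one $N$-window;
\item call $\oAttest$ and $\oVerifyNew$ on $u_i$.
\end{enumerate}
The returned $l'_i$ reveals $b$. So your abort rule is not a proof step but a weakening of Definition~\ref{def:ind}. State it as an amendment to the theorem, not as something to try. The paper's argument only goes through if the challenged user is frozen after registration. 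That is arguably what the printed $\oChallenge$ gives, since it never writes $\LLL[i]$.

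\paragraph{The step that fails.} Even in your weakened game, the rank simulation fails. Your IND-CPA reduction computes $l'_i$ from exact plaintext trajectories. The real $l'_i$, however, is $\mathsf{rank}(\HE.\Dec(\sk_i,c'_i))$. That decryption carries CKKS error from encoding, rescaling and bootstrapping, plus the Chebyshev error unless you track the polynomial itself. This error is small but not negligible in $\lambda$: Table~\ref{tbl:eval-acc-toy} shows deviations up to about $3.5\cdot 10^{-3}$. When a trajectory ends within that error of a rank boundary, the simulated and real ranks can differ. Your abort rule is stated on exact trajectories and does not exclude this case. Hence the reduction is not faithful, and the weakened game can still leak $b$.

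This also means your remark that nothing like IND-CPA$^D$ is needed is wrong: a function of the decryption of an evaluated ciphertext is handed to $\A$. Worse, in the protocol a corrupted opponent $u_j$ of $u_i$ receives $R'_j$, which essentially determines $R_b$ through $S_{exp}$. Only the printed $\oUpdate$, which withholds $R'_x$, protects you there. One repair is to abort whenever either trajectory comes within a bound $B$ of a boundary, where $B$ bounds the CKKS error except with negligible probability. Alternatively, round or flood the decrypted value before ranking it.

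\paragraph{A smaller omission.} Your hiding reduction must also answer $\oAttest$. There \KC{} opens $C_i$ with $r$, which the reduction lacks for the embedded commitment. Have \KC{} sign honest users' commitments without opening them; this is identically distributed, because honest openings always verify.
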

\begin{proof}
Since the adversary receives precisely \CSP{}'s view and also the corrupted users' during the honest execution of the protocol, we can narrow the leakage points of the protocol to see what can reveal information regarding the exact Elo values. Recall that the \CSP{} observes commitments, ciphertexts, and range proofs for the relation $\R_{rccc}$, while the information obtained from the corrupt users will not help the adversary to learn any information about the exact Elo values of the challenged users. It should be notice that since the commitment is a part of the public statement of the range proof, the zero-knowledge property of the range proof itself does not guarantee that no information about the associated Elo value is leaked through the commitment. Thus, information about an honest user's precise Elo value can leak only through: (1) commitments, (2) ciphertexts, and (3) range proofs. We proceed the proof in the following sequence of games:

\noindent $\game_0$: The initial game is the original security game that defines Hidden Elo. 

\noindent $\game_1$: In this game, we replace the real proofs by simulated proofs for the same public statements. By the zero-knowledge property, the adversary's view in $\game_0$ is indistinguishable from $\game_1$. 

\noindent $\game_2$: This game is identical to the previous one, except that we replace each ciphertext generated in the registration algorithm to the encryption of a random Elo value $R^*$ of the same rank. Since all proofs in $\game_1$ are simulated and thus do not depend on any particular witness, replacing the original ciphertexts will not cause any problem. The IND-CPA security guarantees that this replacement is indistinguishable to the adversary. Notice that we only replace the ciphertext with those honestly generated by running the encryption algorithm on random Elo value of the same rank, so the public statement remains valid in the relation. 

\noindent $\game_3$: Here, we replace each corresponding commitment by one computed from $R^*$ and a fresh randomness $r^*$. Because the proofs are simulated and and the commitment scheme is hiding, the replacement is valid and indistinguishable.


Now the adversary's view depends on $R^*$ and it is independent of the random bit $b$. In such case, it can win the game with probability no better than a half. Thus, given that the commitment scheme is hiding, the range proof is zero-knowledge and the homomorphic encryption scheme is IND-CPA secure, our construction \protocol{} preserves Hidden Elo. 
\end{proof}

\section{Evaluation}
\label{sec:eval-results}
We have not benchmarked the range proof separately. In our protocol, the service provider performs many homomorphic evaluations per game, while the user generates at most one proof per $N$ games. Accordingly, the amortised online cost is expected to be dominated by the FHE computation, although the precise trade-off depends on the instantiation. The NIZK considered here is a composed one, combining a commitment-based range proof with a proof that the FHE ciphertext encrypts the same value. It is most naturally instantiated via a lattice-native NIZK~\cite{LyubashevskyNP22}. 
We leave the instantiation and benchmarking of the range proof as future work.


We evaluate the usability of \protocol{} by measuring the runtime, memory usage and accuracy of the FHE-based update mechanism proposed in~\Cref{alg:update} in a chess-based scenario. 
We implement the algorithm with the OpenFHE 
library~\cite{badawi22openfhe} and benchmark it on a standard desktop machine with a~12th Generation Intel Core i7-12700@3.7GHz CPU,~32GB RAM, running Ubuntu~24.04.2 LTS. 

Runtime and memory usage experiments were performed~500 times (single-thread) and the average results for each parameter set is provided in~\autoref{tbl:eval-time-toy} and \autoref{tbl:eval-memory-toy}, respectively. Accuracy results, provided in \autoref{tbl:eval-acc-toy}, are calculated after~10 000 consecutive rating updates on a randomly initiated rating. More specifically, these results show the average difference between the new rating approximated with CKKS-RNS using \Cref{alg:update} (we provide details of the polynomial approximation in~\autoref{app:poly-appro}) and the same update computed on plaintext through \autoref{eq:elo-rating}. The opponents' ratings ($n=3$) are sampled randomly from the same distribution as the player and the match outcomes are picked randomly between loss (0), draw (0.5) and win (1). Additionally, we cover the results of multi-threading in \autoref{app:multi-thread}. We run our experiments on three parameter sets with different security levels: 
\begin{enumerate}
    \item security level $\lambda\approx12$ -- a toy parameter set for showcasing the functionality of the implementation.
    \item security level $\lambda\approx80$ -- custom parameter set estimated to $\lambda = 80$ bits~\cite{albrecht2015concrete}.
    \item security level $\lambda\approx128$ -- standard 128-bit security level parameter set~\cite{albrecht2021homomorphic}.
\end{enumerate}


\noindent \textbf{Open Science \& Reproducible Research.} \enskip 
To support open and reproducible research, we provide our source code on GitHub\footnote{\href{https://github.com/UnoriginalOrigi/mm_elo_HE}{\protocol{} GitHub Repository}}.

\subsection{Runtime Analysis}
\label{subsec:runtime-results}

\begin{table}[ht!]
\caption{Average runtime for each operation in milliseconds (ms) over 500 iterations ($n = 3$).}
\small
\centering
\begin{tblr}{
  width = \linewidth,
  colspec = {Q[350]Q[170]Q[200]Q[180]},
  row{1} = {azure},
  cell{1}{2} = {c},
  cell{1}{3} = {c},
  cell{1}{4} = {c},
  cell{2-13}{2-4} = {r},
  hlines,
  vlines,
  hline{1-2,5,11-12,14} = {0.2em},
  vline{1-2,5} = {0.2em},
}
\diagbox[innerwidth=\linewidth]{\color{white}\textbf{Operation}}{\color{white}\textbf{Sec Level}} & \color{white}$\lambda=12$ & \color{white}$\lambda=80$ & $\color{white}\lambda=128$ \\

Generate $\pk$ and $\sk$    & 12.413   & 204.368      & 418.347    \\
Generate $\mathsf{rlk}$       & 32.656   & 541.294      & 1120.767   \\
Generate $\mathsf{bsk}$       & 78.828   & 1534.842     & 3332.135   \\

Encryption              & 2.396    & 45.868       & 99.446     \\
Decryption              & 2.195    & 27.444       & 55.088     \\
EvalMul (Constant)      & 0.087    & 1.722        & 4.438      \\    
EvalAdd                 & 0.180    & 3.840        & 8.457      \\
EvalSub                 & 0.068    & 0.803        & 2.430      \\
EvalPoly                & 59.853   & 1035.457     & 2171.868   \\

Bootstrapping           & 328.714  & 7449.077     & 15785.967  \\

Rating Update (RU)      & 497.483    & 11465.392    & 23247.823 \\
RU w\textbackslash o Bootstrapping     & 168.769    & 4016.315     & 7461.856 \\
\end{tblr}
\label{tbl:eval-time-toy}
\end{table}

\autoref{tbl:eval-time-toy} showcases the main runtime benchmarks: \begin{enumerate*} [\itshape (i)]
    \item key generation costs ($\pk/\sk$, $\rlk$, $\bsk$),
    \item individual operation costs, and
    \item one full update cost. 
\end{enumerate*}
The key generation costs consist of three parts:
\begin{enumerate}
    \item Generate $\pk$ and $\sk$ -- generates the public/private key pair ($\mathsf{pk/sk}$) for homomorphic encryption and decryption.
    \item Generate $\mathsf{rlk}$ -- generates the relinearization key ($\mathsf{rlk}$) for reducing the size of the ciphertext after multiplication and other operations.
    \item Generate $\mathsf{bsk}$ -- generates the bootstrapping key ($\mathsf{bsk}$) for refreshing the noise of a homomorphic ciphertext.
\end{enumerate}

Moreover, the individual costs consist of algorithms used for computing a rating update: \begin{enumerate*}[(i)]
    \item encryption,
    \item decryption,
    \item bootstrapping,
    \item polynomial approximation (details provided in \autoref{app:poly-appro}),
    \item multiplication,
    \item addition, and
    \item subtraction.
\end{enumerate*}

Following this, it can be observed that the largest computation costs for \protocol{} are seen in the key generation (particularly "Generate $\mathsf{bsk}$") as well as the Bootstrapping algorithms. The combined key generation algorithms are the second most computationally expensive algorithms in the scheme, just behind the Bootstrapping algorithm for all parameter sets. More specifically, the combined time costs for key generation are~0.12s,~2.28s and~4.87s, respectively for each parameter set. However, key generation needs to be run only once during registration for the functionality of \protocol{}. As such, these costs are reasonable in most scenarios as they will not introduce consistent latency for users of the application. 

On the other side, the bootstrapping algorithm is the most expensive part in \protocol{} taking on average~15.79s to compute on security level $\lambda=128$ and a full update takes~23.25s on average. While the computation time is considerably longer than a plaintext alternative (which takes a fraction of a second), the benefits come in the additional security that is provided and keeps the rating secret from malicious users who aim to maximize their odds of obtaining a beneficial result. More importantly, the individual costs of computing a rating update without bootstrapping are low as even on the highest security level parameter set it takes approximately~7.46s to compute an update for a user. In practice, this computational cost would be divided into smaller parts, where \CSP{} could blindly compute part of the update after each match up to $n-1$. In this case, $u_i$ would only have to wait for the final update to be computed after his latest match. In our benchmarking scenario, that would have a latency of approximately~17.99s for $u_i$ on security level $\lambda=128$ (a single run of~\Cref{alg:update}, where $j==n-1$), despite the $n$ value chosen as all previous updates up to $n-1$ could be computed after each match.

\smallskip

\noindent\textbf{Considerations.} \enskip From these experiments, we can conclude that \protocol{} does not allow for constructing usecases which would require instant access to the updated ratings. However, the costs are reasonable for implementing it in less time-sensitive applications, such as for online marketplace reputation systems and online tournaments, where ratings are updated after the tournament concludes.

\subsection{Memory Analysis}
\label{subsec:memory-results}

\begin{table}[ht!]
\caption{Average memory costs for each operation in megabytes (MB) over 500 iterations ($n = 3$).}
\small
\centering
\begin{tblr}{
  width = \linewidth,
  row{1} = {azure},
  colspec = {Q[450]Q[150]Q[150]Q[150]},
  cell{1}{2} = {c},
  cell{1}{3} = {c},
  cell{1}{4} = {c},
  cell{2-15}{2-4} = {r},
  hlines,
  vlines,
  hline{1-2,4,7-8,14-15} = {0.2em},
  vline{1-2,5} = {0.2em},
}
\diagbox[innerwidth=\linewidth]{\color{white}\textbf{Operation}}{\color{white}\textbf{Sec Level}} & \color{white}$\lambda=12$ & \color{white}$\lambda=80$ & $\color{white}\lambda=128$ \\

Generate Context     & $5.70$  & $82.80$    & $165.55$  \\
Precompute Bootstrapping    & $2.14$  & $28.06$    & $55.42$  \\

Generate $\pk$ and $\sk$     & $6.88$  & $129.12$    & $259.04$  \\
Generate $\mathsf{rlk}$    & $12.68$  & $197.79$    & $395.36$  \\
Generate $\mathsf{bsk}$      & $163.04$  & $3538.72$   & $7548.96$ \\

One Ciphertext         & $1.04$   & $15.70$   & $29.56$ \\

Encryption                & 0.00        & 0.00      & 0.00    \\
Decryption                & 0.00        & 0.00      & 0.00       \\
EvalMul (Constant)        & 0.32        & 0.00      & 0.00        \\    
EvalAdd                   & 0.00        & 4.80      & 2.24       \\
EvalSub                   & 0.00        & 0.00      & 0.00        \\
EvalPoly                  & 3.80        & 33.90     & 81.21       \\

Bootstrapping             & 32.17       & 21.14     & 20.10    \\  
\end{tblr}
\label{tbl:eval-memory-toy}
\end{table}

Similarly to~\autoref{subsec:runtime-results}, the memory cost benchmarks presented in \autoref{tbl:eval-memory-toy} are divided into three parts: \begin{enumerate*} [\itshape (i)]
    \item context generation and precomputation memory usage,
    \item key generation memory usage, and
    \item individual operation memory usage.
\end{enumerate*}

From these results, the core memory overhead of \protocol{} is from setting up the parameter sets as well as generating the required keys. More specifically, the highest costs comes from generating and storing the bootstrapping key ($\bsk$) with the costs reaching approximately 7 549 MB (7.5 GB) of memory on the highest security level. While memory costs for generating the cryptographic contexts and keys are high, they are manageable by \CSP{}. The keys can be serialized and stored securely on a database and only loaded into memory when a rating update is required for a user. With current capabilities of servers with terabytes of storage the approach is scalable to moderate user base sizes. Future improvements to HE key sizes, through bootstrapping key optimizations~\cite{agrawal2024heap}, would improve the scalability of the approach even further.

Conversely, the memory costs of the individual operations are negligible, varying between~0.00 MB and~0.081 MB throughout the individual operations
for security level $\lambda=128$. In lower security levels the variation was even lower. 
Due to this, it is possible to conclude that the individual operations \textbf{do not} contribute to the memory usage.

\smallskip

\noindent\textbf{Considerations.} \enskip The overall memory costs are comprised of preparing the parameter set and generating the needed keys for performing all computations. In view of these results, each application case needs to decide how to implement key management. With a smaller user bases, each user can generate their own keys while retaining proper functionality. Conversely, usecases with large user bases would require an alternative solution, such as having trusted, intermediate parties for key management (for example team managers who could reveal information to their players).

\subsection{Accuracy}
\label{subsec:accuracy-results}

\begin{table}[ht!]
\caption{Rating difference plaintext and HE computed updates over 10,000 consecutive updates ($n = 3$). Lower is better.}
\small
\centering
\begin{tblr}{
  width = \linewidth,
  row{1} = {azure},
  colspec = {Q[330]Q[200]Q[200]Q[200]},
  cell{1}{2-4} = {c},
  cell{2-5}{2-4} = {r},
  hlines,
  vlines,
  hline{1-2,6} = {0.2em},
  vline{1-2,5} = {0.2em},
}
\diagbox[innerwidth=\linewidth]{\color{white}\textbf{Metric}}{\color{white}\textbf{Sec Level}} & \color{white}$\lambda=12$ & \color{white}$\lambda=80$ & $\color{white}\lambda=128$ \\
Mean difference         & $0.620 \cdot 10^{-4}$   &  $1.990 \cdot 10^{-4}$  &  $5.569 \cdot 10^{-4}$  \\
Standard deviation      & $0.463 \cdot 10^{-4}$   &  $1.601 \cdot 10^{-4}$  &  $4.631 \cdot 10^{-4}$  \\
Min difference      & $0.0004 \cdot 10^{-4}$   &  $0.000$  &  $0.0004 \cdot 10^{-4}$  \\
Max difference      & $2.715 \cdot 10^{-4}$   &  $12.94 \cdot 10^{-4}$  &  $34.92 \cdot 10^{-4}$  \\
\end{tblr}
\label{tbl:eval-acc-toy}
\end{table}

CKKS is an approximate HE scheme, meaning arithmetic operations introduce noise to the result, which may deteriorate the precision of the underlying plaintext. As \protocol{} makes use of CKKS for rating updates, it is important to measure this variance between plaintext and encrypted computations. Following this, \autoref{tbl:eval-acc-toy} provides four metrics to measure the deviation from plaintext results (the estimated precision is covered in \autoref{app:prec-bits}). Specifically, the table contains: 
\begin{enumerate*} [\itshape (i)]
    \item mean difference,
    \item standard deviation,
    \item minimum difference, and
    \item maximum difference.
\end{enumerate*}



Our approach shows negligible deviation from plaintext results, as the mean difference between the plaintext and encrypted results are only~$5.569\cdot10^{-4}$ with the maximum difference being~$34.92\cdot10^{-4}$ on some rating updates. Additionally, these results were achieved without using more computationally expensive approaches, such as iterative bootstrapping~\cite{lee2021high}. The standard deviation is equivalently low, resulting in a deviation of~$4.631\cdot10^{-4}$ after~10 000 consecutive rating updates. Additionally, the minimum difference reached as low as~0.000 (result showed all zeroes from 8 digits in the fractional part) after 10,000 consecutive updates.

As such, we can conclude that while CKKS does introduce some noise to the computations, the overall level of the noise remains low and \textbf{does not} introduce any meaningful deviation from the exact plaintext results. While the rating is an approximation, a difference of~$5.569\cdot10^{-4}$ on average would not affect the fairness of matchmaking even after~10 000 consecutive updates.

\subsection{Comparison to Other Works}

\begin{center}
    \begin{tcolorbox}[colback=yellow!10!white,colframe=red!75!black,title=\textbf{Lack of Similar Works as an Indicator of Novelty}]
        Comparing \protocol{} with different private matchmaking techniques introduces unique difficulties for fair comparisons. For example, works on PSI~\cite{he2022differentially,cong2021labeled} focus on finding and \textit{revealing} unique items which are the same between two individual sets, leading to a conceptually different approach than \protocol{} as in our work the focus is on keeping a single value (the rating) hidden even after finding a match. As such, there are few works -- most notably Matchmaking Encryption~\cite{ateniese2021match} -- with similar goals which would allow for a meaningful comparison with \protocol{}. 
    \end{tcolorbox}
\end{center}

One work that we identified that can be used as a point of comparison is ME~\cite{ateniese2021match}. ME aims to match two users who have individual attributes and can set their own policies for a ciphertext to be revealed only when an exact match occurs between the attributes and policies. While different from \protocol{}, ME could be used to implement a matchmaking system between two similarly matched opponents following the same flow as \protocol{}. Some of the main differences between ME and \protocol{} is the underlying primitives used, the trust assumptions and the performance. 

Firstly, \protocol{} provides post-quantum security as FHE is constructed with the quantum-resistant RLWE assumption, conversely the proposed construction of ME relies on the Bilinear Diffie-Hellman (BDH) assumption, which is vulnerable to quantum attacks. 

Secondly, 
ME assumes the existence of a Trusted Authority (TA) that 
can have access to each players policies and attributes and is responsible for generating the keys associated to them. 
\protocol{} also makes use of a TA, however, the only function it performs is attesting that the ciphertext linked to a users new rating is generated correctly. Consequently, the trust assumptions required in ME are much stronger than those required by \protocol{}. 

Lastly, looking at the performance of the two works, ME~\cite{ateniese2021match} outperforms \protocol{} by a considerable margin even without taking into account the rating update computations as shown in \autoref{tbl:ME-comp}. However, 
this gap is something to be expected as \protocol{} makes use of 
RLWE, and therefore provides post-quantum security, while ME is implemented using BDH which makes the underlying construction more efficient but offers weaker security guarantees. 
Additionally, ME can run into the same issues as plaintext solutions since a players rating can be kept hidden from them as only the TA needs to know the rating for attribute certification, resulting in a player being unsure of his exact rating, which \protocol{} avoids. 

\begin{table}[h]
\caption{Comparison between \protocol{} and Matchmaking Encryption. Computation time in miliseconds (ms)}
\centering
\begin{tblr}{
  width = \linewidth,
  colspec = {Q[429]Q[404]Q[404]},
  row{1} = {azure},
  column{2-3} = {r},
  cell{1}{2-3} = {c},
  hlines,
  vlines,
}
           & \color{white}\textbf{ME} ($\lambda = 80$) & \color{white}\protocol{} ($\lambda = 80$) \\
Key Generation      & 7.867 & 16110.504    \\
Encryption          & 7.012 & 45.868    \\
Decryption          & 4.385 & 27.444   
\end{tblr}
\label{tbl:ME-comp}
\end{table}

To conclude, while \protocol{} has considerably higher computational costs than ME, \protocol{} provides 
stronger security guarantees and lowers the overall trust assumptions needed for implementing rating-based matchmaking. Naturally, a service provider would have to weigh their own capabilities and needs when choosing how to implement a private matchmaking solution and what privacy leakages are acceptable in their individual scenario.

\section{Conclusion}
\label{sec:conclusions}
In this paper, we introduced \protocol{}, a rating system-based matchmaking protocol by making use of HE and NIZK proving system. To the best of our knowledge, this is the first work to tackle private and secure matchmaking through a rating system without obfuscating a users ratings from the user themselves. From our analysis we show that there is latency can conclude that our approach can be applied to real world scenarios without any noticeable latency for users. Additionally, we prove \protocol{} allows for fair matchmaking against a possible adversary.



\textbf{Future Directions.} 
This paper provides the foundation for exploring private matchmaking through rating system. We believe that our work can further be expanded into other rating systems, such as Glicko and we leave this research as a future work. Alternative constructions could focus on performing the rating updates in a decentralized setting, namely instead of outsourcing the computation to a CSP, the user could make the update locally and prove the correctness of the update process and output through zk-SNARKS. Further work could make use of scheme switching in HE to allow for simpler encoding for building an efficient range proof system. 


\begin{acks}
    This work was funded by the EU research project SWARMCHESTRATE (No. 101135012). We thank Dr. Zichen Gui for the discussions and his advice during the writing process. We extend our gratitude to all the reviewers and the valuable feedback they provided to improve this work.
\end{acks}


\printbibliography

\appendix
\begin{figure*}[ht!]
    \begin{subfigure}[b]{.40\textwidth}
\begin{tikzpicture}
		\begin{axis}[
    			width=\linewidth,
    			height=.6\linewidth,
    			xlabel={Users},
    			ylabel={Runtime (s)},
    			xmin=0, xmax=310,
    			ymin=0, ymax=820,
    			xtick={0, 50, 100, 150, 200, 250, 300},
    			ytick={0, 200, 400, 600, 800},
    			legend pos =north west,
    			ymajorgrids=true,
    			grid style = dashed,
            ]
			\addplot[
    			color=blue,
    			mark=square,
    			]
    			coordinates {
    				(1,0.384)(50,3.299)(100,6.292)(150,9.208)(200,12.189)(250,15.145)(300,18.159)
    			};
			\addplot[
    			color=red,
    			mark=square,
    			]
    			coordinates {
    				(1,8.780)(50,68.287)(100,128.621)(150,189.420)(200,249.282)(250,307.585)(300,374.827)
    			};
                \addplot[
    			color=forestgreen,
    			mark=square,
    			]
    			coordinates {
    				(1,18.515)(50,144.161)(100,270.701)(150,397.101)(200,525.792)(250,664.027)(300,798.998)
    			};
			\legend{$\lambda = 12$, $\lambda = 80$, $\lambda = 128$}
		\end{axis}
	\end{tikzpicture}
	\caption{Runtime per Opponent Amount}
	\label{sfig:time-user}
    \end{subfigure}
    \begin{subfigure}[b]{.40\textwidth}
    	\begin{tikzpicture}
		\begin{axis}[
    			width=\linewidth,
    			height=0.6\linewidth,
    			xlabel={Users},
    			ylabel={Memory Usage (MB)},
    			xmin=0, xmax=310,
    			ymin=0, ymax=8500,
    			xtick={0, 50, 100, 150, 200, 250, 300},
    			ytick={0, 2000, 4000, 6000, 8000},
    			ymajorgrids=true,
    			grid style = dashed,
            ]
			\addplot[
    			color=blue,
    			mark=square,
    			]
    			coordinates {
    				(1,20.156)(50,66.396)(100,109.788)(150,163.284)(200,214.064)(250,262.484)(300,335.036)
    			};
			\addplot[
    			color=red,
    			mark=square,
    			]
    			coordinates {
    				(1,4.136)(50,565.540)(100,1212.900)(150,1854.568)(200,2491.208)(250,3246.724)(300,4159.440)
    			};
                \addplot[
    			color=forestgreen,
    			mark=square,
    			]
    			coordinates {
    				(1,2.968)(50,1133.776)(100,2438.932)(150,3692.300)(200,4977.900)(250,6479.836)(300,8312.484)
    			};
		\end{axis}
	\end{tikzpicture}
	\caption{Memory Usage per Opponent Amount}
	\label{sfig:memory-user}
    \end{subfigure}
    \caption{Opponent Number Cost Impact}
    \label{fig:opp-impact}
\end{figure*}

\section{Design Rationale}
\label{ssec:design-rat}

To implement H-Elo it was important to decide on the HE scheme we would utilise for the homomorphic operations. The main schemes for consideration were \begin{enumerate*}[\itshape(i)]
    \item B/FV;
    \item BGV;
    \item CKKS;
    \item TFHE.
\end{enumerate*}
Additionally, the functionalities we required are as follows: \begin{enumerate}
    \item Ability to compute an Elo rating update, which consists of \begin{enumerate*}[\itshape(i)]
        \item two additions;
        \item two subtractions;
        \item two multiplications with a constant;
        \item a polynomial approximation of~\autoref{eq:exp-result}.
    \end{enumerate*}
    \item Bootstrapping capabilities;
    \item Allow values from~0 to~4000.
\end{enumerate}

Following our requirements we analysed the main HE schemes. Beginning with TFHE, while it is optimized to perform bootstrapping on sparsely packed ciphertexts, TFHE uses logic gates for homomorphic operations. Due to this, arithmetic operations become more complex to implement compared to other schemes. Additionally, as TFHE performs bootstrapping after each gate, the computational and memory costs of TFHE grow rapidly for each arithmetic operation that is performed~\cite{tsuji2024comparison}. Additionally, the performance of TFHE is the best when the values are bound to~8-bit integers ($2^8$). While values upwards of~$2^{32}$ are possible, the performance of the scheme degrades and takes longer for the computations than other schemes~\cite{al2023demystifying,tsuji2024comparison}. As such, we excluded TFHE as a choice.

The remaining schemes are the exact HE schemes B/FV and BGV, as well as CKKS, an approximate HE scheme. Both exact schemes have similar capabilities for integer-based homomorphic operations and fit our requirements. However, the core reasoning for not choosing these schemes comes from performance, namely integer-based polynomial approximation of~\autoref{eq:exp-result} ends up being both costly and inaccurate, with the polynomial depth being limited and the result potentially becoming more inaccurate than CKKS. Additionally, another reason for choosing CKKS was the cost of bootstrapping. While BGV does have better bootstrapping capabilities in terms of performance for sparsely packed ciphertext (to a point), it shows lower precision of the resulting refreshed ciphertext than CKKS. This could introduce a more pronounced error between the plaintext and encrypted result. Additionally, following comparisons of CKKS and BGV, we can notice that BGV performs worse when bootstrapping is required~\cite{al2023demystifying} than CKKS. Due to these reasons, we choose CKKS over other HE schemes. 

\section{Notation}
\label{app:notation}


We denote the HE algorithms for key generation, encryption and decryption as we defined them in~\autoref{def:he}. 
We expand the evaluation algorithm into separate algorithms respective to their underlying function. More specifically: \begin{itemize}
    \item $\mathsf{EvalAdd}(c_x, c_y)$ -- is the addition algorithm, which evaluates $c_x + c_y$;
    \item $\mathsf{EvalSub}(c_x, c_y)$ -- is the subtraction algorithm, which evaluates $c_x - c_y$;
    \item $\mathsf{EvalMul}(c_x, c_y)$ -- is the multiplication algorithm, which evaluates $c_x \cdot c_y$;
    \item $\mathsf{EvalPoly}(c_x, f)$ -- is the polynomial approximation algorithm, which uses Chebyshev approximations to evaluate a specific function $f(c_x)$.
    \item $\mathsf{EvalBootstrap(}c)$ -- is the bootstrapping algorithm, which refreshes the accumulated noise of a ciphertext $c$ to allow further computations. More specifically, it can be defined as homomorphically evaluating the decryption circuit with an encrypted secret key (also known as the bootstrapping key $\mathsf{bsk = HE.Enc(pk, sk)}$), namely $\mathsf{HE.Bootstrap(}c) = \mathsf{HE.Eval(evk,HE.Dec(\cdot)},$ $\mathsf{bsk}, c)$.
\end{itemize}  

\section{Multi-threaded Results}
\label{app:multi-thread}
Increasing the number of matches required for a rating update increases the runtime costs linearly as can be seen in \autoref{sfig:time-user}. The protocol execution can also be \textit{multi-threaded} resulting in lower runtime for a rating update. In comparison, a single-threaded execution takes approximately~799s, while a \textit{multi-threaded} execution takes~297s (approximately~$60\%$ decrease in runtime), when $n = 300$ and $\lambda = 128$. Naturally, these results would continue to hold true as $n$ increases, but the efficiency of multi-threading is reduced when $n$ is reduced. It is important to note, that these results show the total time to fully compute the new rating after all matches concluded, however as noted previously, this would not be required in most applications, as the expected outcomes ($c_{exp}$ and their sum in $c_{exp-sum}$) could be precomputed after each match to optimize the update time. So the practical latency would be simplified and reduced to a \textbf{single} run of \autoref{alg:update}, resulting in a true latency of about 18s before the new rating is obtained.

Similarly to the runtime analysis, the memory costs rise linearly as more matches (opponents) are required (see~\autoref{sfig:memory-user}). Using the \textit{multi-threaded} approach increases memory usage by about~$35\%$, increasing from approximately~8 312 MB (8.3 GB) of memory usage to~11 240 MB (11.2 GB) of memory usage, when $n = 300$ and $\lambda = 128$.

\section{Polynomial Approximation of Equation 2}
\label{app:poly-appro}
We utilised Chebyshev polynomials to approximate \autoref{eq:exp-result} and decided upon the following parameters: 

\begin{table}[h]
\caption{Polynomial approximation details of \autoref{eq:exp-result}}
\centering
\begin{tblr}{
  width = \linewidth,
  colspec = {Q[300]Q[200]},
  row{1} = {azure},
  cell{1}{2} = {c},
  cell{2-4}{2} = {c},
  hlines,
  vlines,
}
                     & \color{white}$\frac{1}{1+10^x}$      \\
Bound                & [-5;+5] \\
Polynomial Degree    & 50        \\
Multiplicative Depth & 7         
\end{tblr}
\label{tbl:poly-approx-details}
\end{table}

We analysed the potential bounds for the function $\frac{1}{1+10^x}$, where $x= (R_{opp}-R_{player})/400$. In this case, the bound will depends on the user rating and equivalently the range in each rank, so if a single rank covers~500 rating points, then the bound will be between the values [-1.25;+1.25], naturally as the gap between ranks increases, the bound increases. To simulate the capabilities of our scheme we choose to use a bound of [-5;+5], which corresponds to a maximum difference of~2000 rating points between players. While this bound is not realistic, it acts as a stress test, showing that the scheme can remain accurate and functional even with large bounds. Following this, we picked the polynomial degree to be~50 as from our experiments this degree showed the best accuracy and performance trade-off. Namely, this choice results in an approximate error of about~$10^{-6}$, from our testing this resulted in the best accuracy for a full update even with bootstrapping, which tends to degrade the precision. We provide the error curve in~\autoref{fig:poly-error}.

\begin{figure}[h]
    \centering
    \includegraphics[width=\linewidth]{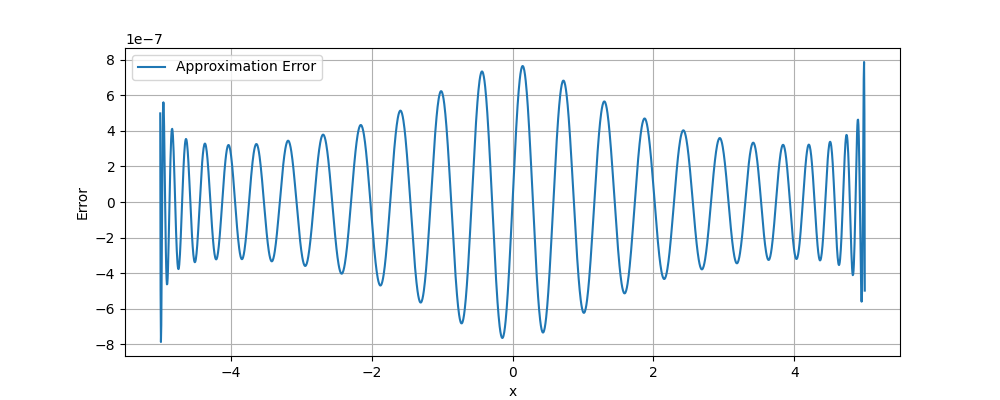}
    \caption{Polynomial Approximation Error, when Bound = [-5;+5] and Degree = 50.}
    \label{fig:poly-error}
\end{figure}

\section{Precision Bits}
\label{app:prec-bits}
Throughout most runs, the estimated precision remained around 52-bits. However, during certain updates it dropped to values between~10-14 bits. We showcase the change of the estimated precision in \autoref{fig:128-precision}. Despite this, the drops did not have a major impact on the actual accuracy results on average as shown in~\autoref{tbl:eval-acc-toy}.

\begin{figure}[h]
    \centering
    \includegraphics[width=\linewidth]{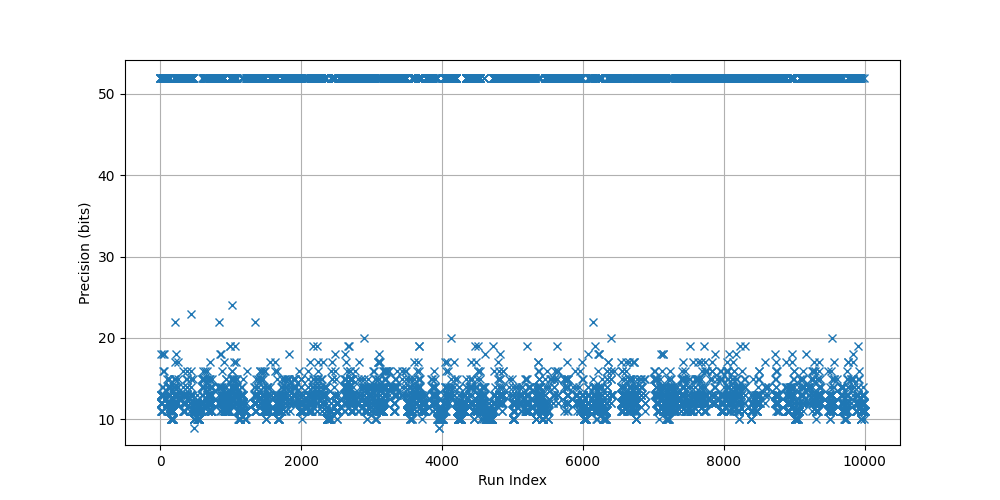}
    \caption{Precision bits after 10 000 consecutive updates, when $\lambda=128$.}
    \label{fig:128-precision}
\end{figure}

\end{document}